\DeclareMathOperator*{\st}{s.t.}
\theoremstyle{definition}
\newtheorem{prop}{Proposition}
\newtheorem{Remark}{Remark}
\begin{document}

\title{{Task-Oriented Computation Offloading for Edge Inference: An Integrated Bayesian Optimization and Deep Reinforcement Learning Framework}

\author{Xian~Li,~\IEEEmembership{Senior Member,~IEEE}, Suzhi~Bi,~\IEEEmembership{Senior Member, IEEE}, and Ying-Jun Angela Zhang,~\IEEEmembership{Fellow, IEEE}}

\thanks{An earlier version of this paper has been accepted by the IEEE International Conference on Communications (ICC) 2026 \cite{liLAB2026}.
	
Xian Li and Suzhi Bi are with The State Key Laboratory of Radio Frequency Heterogeneous Integration (Shenzhen University), and also with the College of Electronics and Information Engineering, Shenzhen University, Shenzhen, China 518060.(email: \{xianli,~bsz\}@szu.edu.cn).

Ying-Jun Angela Zhang is with the Department of Information Engineering, The Chinese University of Hong Kong, Hong Kong (email: yjzhang@ie.cuhk.edu.hk).}}
\maketitle

\begin{abstract}
Edge intelligence (EI) allows resource-constrained edge devices (EDs) to offload computation-intensive AI tasks (e.g., visual object detection) to edge servers (ESs) for fast execution. However, transmitting high-volume raw task data (e.g., 4K video) over bandwidth-limited wireless networks incurs significant latency. While EDs can reduce transmission latency by degrading data before transmission (e.g., reducing resolution from 4K to 720p or 480p), it often deteriorates inference accuracy, creating a critical accuracy-latency tradeoff. The difficulty in balancing this tradeoff stems from the absence of closed-form models capturing content-dependent accuracy-latency relationships. Besides, under bandwidth sharing constraints, the discrete degradation decisions among the EDs demonstrate inherent combinatorial complexity. Mathematically, it requires solving a challenging \textit{black-box} mixed-integer nonlinear programming (MINLP). To address this problem, we propose LAB, a novel learning framework that seamlessly integrates deep reinforcement learning (DRL) and Bayesian optimization (BO). Specifically, LAB employs: (a) a DNN-based actor that maps input system state to degradation actions, directly addressing the combinatorial complexity of the MINLP; and (b) a BO-based critic with an explicit model built from fitting a Gaussian process surrogate with historical observations, enabling model-based evaluation of degradation actions. For each selected action, optimal bandwidth allocation is then efficiently derived via convex optimization. Numerical evaluations on real-world self-driving datasets demonstrate that LAB achieves near-optimal accuracy-latency tradeoff, exhibiting only 1.22\% accuracy degradation and 0.07s added latency compared to exhaustive search. Notably, it outperforms conventional DRL with 3.29\% higher accuracy and 42.60\% lower latency, demonstrating its advantageous performance in handling black-box optimization problems. The complete source code for LAB is available at \url{https://github.com/Ethan-Xian-Li/LAB}.
\end{abstract}

\begin{IEEEkeywords}
Edge intelligence, Bayesian optimization, deep reinforcement learning, computation offloading.
\end{IEEEkeywords}

\IEEEpeerreviewmaketitle

\section{Introduction}
\begin{figure}
	\centering
	\includegraphics[scale=0.445]{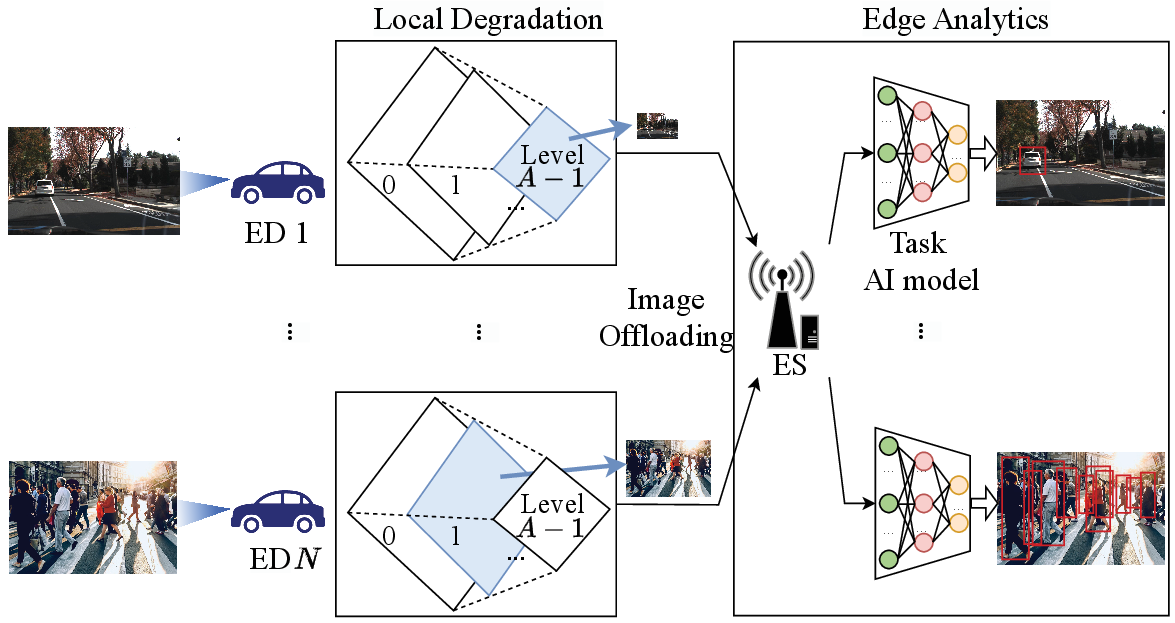}
	\captionsetup{font=footnotesize}
	\caption{An example of task-oriented computation offloading in an EI system for visual object detection. }
	\label{sys_mod}
\end{figure}

{\color{black}{Edge inference (EI) enables resource-constrained edge devices (EDs) to execute computation-intensive AI tasks (e.g., pedestrian detection and cooperative perception for autonomous vehicles \cite{baiCollaborative2025,yangEdge2021, liResourceEfficient2026}) through task-oriented computation offloading.}} This paradigm offloads computations from EDs to proximal edge servers (ESs) to optimize task-specific performance metrics like detection accuracy. However, transmitting high-fidelity raw data over bandwidth-limited wireless links incurs significant latency. To address this, EDs may perform data degradation before transmission. As Fig. \ref{sys_mod} illustrates for visual object detection, EDs can select from multiple degradation levels and downsample high-resolution (e.g., 4K) video streams to lower-resolution (e.g., 720p or 480p). While this degradation reduces transmission latency, it sacrifices inference accuracy, creating an inherent accuracy-latency tradeoff. To achieve an optimal tradeoff under time-varying channel and content, an adaptive offloading scheme is required.

Designing adaptive offloading for edge-AI inference tasks presents fundamental challenges. Crucially, AI inference quality is highly sensitive to the task content. Identical degradation operations (e.g., the same degradation level) may yield divergent accuracy outcomes for different input contents. For example, in visual object detection, sparse scenes (e.g., a single vehicle on an empty street) often tolerate aggressive degradation as critical features remain discernible, whereas complex scenes (e.g., crowded pedestrian crossings) demand high-fidelity inputs to maintain recognition accuracy. Unlike conventional data-centric metrics (e.g., computation rate and energy consumption) which have explicit formulations, task-specific metrics like accuracy exhibit strong content dependency, making it difficult to build closed-form analytical models. Instead, it creates \textit{black-box} optimization problems that require computationally heavy experimental evaluations. Besides, practical EI deployments typically involve multiple EDs competing for limited bandwidth resources. Optimizing system-wide performance thus requires joint coordination of both discrete degradation choice per ED and continuous bandwidth allocation across competing devices. The tight coupling of hybrid decision variables, compounded by the absence of a closed-form objective function, poses a challenging black-box mixed-integer nonlinear programming (MINLP) problem. 

Due to the black-box nature, conventional MINLP solvers (e.g., branch-and-bound \cite{morrisonBranchandbound2016} and metaheuristics like particle swarm optimization (PSO) \cite{shamiParticle2022}) are fundamentally unsuited for dynamic EI environments. Their trial-based nature requires on-device execution of multiple candidate actions to select the best one. For instance, PSO requires EDs to transmit numerous degraded versions to the ES to find the best action from its particle swarm candidates. Such trial-and-execute operations consume substantial bandwidth and computational resources, rendering them fundamentally incompatible with latency-sensitive EI systems. Alternatively, Bayesian optimization (BO) \cite{brochuTutorial2010} overcomes this limitation by eliminating the on-device execution of candidate actions. Specifically, it builds probabilistic surrogate models of black-box objective function from historical observations, and designs acqusition functions to select proper actions \cite{wangRecent2023}. However, standard BO becomes computationally prohibitive for MINLP problems at practical network scales, where the global optimization of the acquisition function is impeded by the combinatorial action selection. While deep reinforcement learning (DRL) \cite{arulkumaranDeep2017,lillicrapContinuous2019} enables rapid decision-making for adaptive offloading, its model-free nature requires extensive environmental interactions to train its policy. This results in poor sample efficiency and premature convergence to ineffective polices, hindering its application in dynamic EI environments.

In this paper, we propose a novel learning framework for task-oriented computation offloading that combines the real-time decision-making strength of DRL with the sample efficiency of BO. As shown in Fig. \ref{sys_mod}, we consider multiple EDs performing visual object detection tasks, and aim to maximize the long-term average detection accuracy while minimizing inference latency. The main contributions are summarized as follows:

\begin{itemize}
	\item \textit{Task-oriented Joint Computation Offloading and Resource Allocation}: We formulate a joint computation offloading and resource allocation problem to optimize task-specific objectives including detection accuracy and latency. The major challenge lies in the lack of an explicit closed-form function that maps optimization variables to the objective. Besides, the combinatorial nature of discrete degradation control variables across EDs prohibits real-time adaptation to fast-varying wireless environments.
	\item \textit{Integrated DRL and BO Framework}: To solve the target problem, we propose LAB, a novel learning framework that seamlessly integrates the real-time decision-making of DRL and the model-based evaluation of BO. Specifically, LAB features a \textit{DNN-based actor} that generates a compact set of high-potential degradation actions given an input system state, thereby overcoming the combinatorial complexity. These candidate degradation actions are then evaluated by a \textit{BO-based critic}, which fits a probabilistic surrogate model on historical execution data and analytically ranks candidate actions via closed-form acqusition functions. For the selected highest-ranked action, the optimal bandwidth allocation is derived via convex optimization. 
	
	\item \textit{Exploration-Efficient Actor Design}: We design an exploration-efficient DNN-based actor that generates a diverse yet high-quality set of candidate degradation actions by sampling near high-preference regions of the DNN's output space. The actor dynamically adjusts the candidate set size according to the distance between empirically optimal actions and the DNN's predicted preference. Larger distances trigger expansion of the exploration range to discover potentially better actions, and vice versa. Such a self-adjusting mechanism maintains an efficient exploration-exploitation trade-off, enabling robust adaptation to dynamic visual environments without compromising computational efficiency.
	\item \textit{Real-World Dataset Validation}: We evaluate LAB on a real-world self-driving dataset for object detection. The LAB framework demonstrates: (a) Near-ideal accuracy with merely 1.22\% performance deficit and only 0.07s additional end-to-end inference latency versus exhaustive search; (b) 3.29\% higher accuracy with 42.60\% lower latency than conventional DRL approaches; (c) 32.96\% faster inference than standard BO at equivalent accuracy, with computation overhead reduced to less than $1/2000$ of the original. 
\end{itemize}

\section{Related Works}
Computation offloading is a well-established technique in mobile edge computing (MEC), enabling resource-constrained EDs to leverage a nearby ES for executing demanding tasks. Conventional MEC offloading strategies primarily follow two paradigms: binary offloading \cite{Huang2019,Nguyen2019a,duQoEGuaranteed2026,biLyapunovGuided2021a}, where tasks are executed entirely locally or remotely, and partial offloading \cite{liEnergyEfficient2022,duProfit2025,wangMultiantenna2019,gaoTask2023,liOptimal2024}, which partitions computational workloads between EDs and the ES. Despite effectively optimizing system-level metrics (e.g., computation rate, energy consumption) using explicit objectives, these data-centric approaches treat task data as generic payloads, ignoring its critical role as content carrier for AI performance. 
As a result, they disregard how processing operations (e.g, resolution degradation) alter input content and consequently degrade task-specific performance such as inference accuracy. Conventional offloading strategies \cite{Huang2019,Nguyen2019a,duQoEGuaranteed2026,biLyapunovGuided2021a,liEnergyEfficient2022,duProfit2025,liOptimal2024,wangMultiantenna2019,gaoTask2023} therefore fail to optimize the accuracy-latency tradeoff in EI systems.

Emerging approaches mitigate this challenge by explicitly characterizing action-to-performance relationships through empirical modeling. For example, Yang et al. \cite{yangEfficient2025} fitted an empirical model between synthesis accuracy and denoising steps in diffusion models using DreamBooth simulations \cite{ruizDreamBooth2023}, enabling the joint optimization of user scheduling and denoising partitions to balance latency and accuracy. Similarly, Liu et al. \cite{liuResource2023} leveraged pre-deployment profiling of DNNs (e.g., ResNet) to establish monotonic layer-to-accuracy relationships in multi-user EI systems with early exiting. Building on this foundation, they jointly optimized user scheduling and exit-point selection to maximize system throughput under per-user latency and accuracy constraints. While effective in static environments, these empirical fitting-based approaches \cite{yangEfficient2025,liuResource2023} require substantial task-specific data collection to build models, and their static mappings cannot adapt to dynamic environments during operation.

To address environmental dynamics, researchers have developed online learning optimizers that continuously refine decisions through real-time feedback. Representative approaches includes DRL frameworks \cite{wangEdge2023} and BO-based strategies \cite{tangBayesian2024,galanopoulosAutoML2021a}. For instance, Wang et al. \cite{wangEdge2023} proposes a real-time DRL solution for joint frame-resolution adaptation and bandwidth allocation in edge video analytics to balance latency and detection accuracy. While standard DRL enables online policy updates, its model-free treatment of action-to-performance relationships demands intensive environmental interactions during training. This ultimately undermines its resilience to rapid changes and raises the risk of non-convergence in dynamic environments. In contrast, BO-based approaches offer a complementary solution by constructing sample-efficient surrogate models. For example, Galanopoulos et al. \cite{galanopoulosAutoML2021a} pioneered an AutoML framework for video analytics configuration using BO principles, avoiding costly trial executions through surrogate model-based evaluation. Similarly, Tang et al. \cite{tangBayesian2024} applied BO to the identical joint adaptation problem addressed by \cite{wangEdge2023}, achieving comparable accuracy-latency performance while demonstrating superior temporal stability and user fairness. Both BO strategies, however, face computational intractability when optimizing hybrid discrete-continuous action spaces at large network scales. Such intractability stems from the combinatorial explosion during acquisition function maximization, incurring prohibitive decision latency incompatible with real-time EI processing demands.

Given the advantages of both approaches, Gong et al. \cite{gongBayesian2023} stand as one of the earliest and still rare attempts at combining BO and DRL, demonstrated in the context of UAV trajectory optimization. Crucially, their method treats BO and the DRL actor as parallel action generators, producing competing actions subsequently evaluated and selected by a conventional DNN-based critic. This structure maintains a black-box evaluation process and consequently inherits core DRL limitations in adaptation speed and sample efficiency. In contrast, our work introduces a fundamentally redesigned DRL-BO framework. Within this architecture, a DNN-based actor efficiently identifies candidate solutions while a dedicated BO-based critic performs rapid evaluation using closed-form acquisition functions. This integrated design takes advantage of complementary strengths of BO and DRL to efficiently tackle black-box optimization problems in dynamic EI environments.

\section{System Model and Problem Formulation}
\subsection{System Model}
As shown in Fig. \ref{sys_mod}, we consider an EI system consisting of an ES supporting $N\geq1$ EDs. The system operates over $T>0$ sequential time slots, with each ED performing an individual object detection task per slot. Due to the constrained computational resources at the EDs, the task AI model supporting multi-resolution inputs resides on the ES. Each ED collects RGB images from its surrounding environment and relies on the ES for edge processing. During each time slot $t$, the system executes a sequential workflow comprising local degradation at EDs, image offloading to the ES, and edge analytics at the ES.   

1) \textit{Local Degradation}: At the beginning of time slot $t$, each ED$_n$ ($1\leq n\leq N$) captures a raw image at native resolution $\bm{r}_{n}^{\rm o}=(\iota_n^{\rm w}, \iota_n^{\rm h})$, where $\iota_n^{\rm w}$ and $\iota_n^{\rm h}$ denote the pixel dimensions in width and height, respectively. To alleviate communication overhead, ED$_n$ optionally applies local degradation techniques to reduce transmission data volume. In this paper, we implement local degradation through Gaussian pyramid downsampling \cite{burtLaplacian1983}, which utilizes dyadic spatial scaling for resolution reduction. Specifically, ED$_n$ selects an integer degradation level $a_{t,n}$ from a finite discrete set $\mathcal{A}=\{0,1,\cdots,A-1\}$, where $A$ denotes the maximum available degradation levels. Then, the offloading resolution of ED$_n$ during time slot $t$ is
\begin{equation}\label{eq_offres}
	\bm{r}_{t,n} = \left(\frac{\iota_n^{\rm w}}{2^{a_{t,n}}},\frac{\iota_n^{\rm h}}{2^{a_{t,n}}}\right),
\end{equation}
where $a_{t,n}=0$ preserves native resolution, and each integer increment of $a_{t,n}$ halves spatial dimensions. The computational latency $\tau_{t,n}^{\rm d}$ required for this resolution adaptation is given by:
\begin{equation}\label{eq_latencyd}
	\tau_{t,n}^{\rm d}=f_{\rm d}(\bm{r}_n^{\rm o}, a_{t,n};\psi_n^{\rm d}),
\end{equation}
where $f_{\rm d}(\cdot)$ is a device-specific function and $\psi_n^{\rm d}$ characterizes the computational efficiency parameter of the ED$_n$ \cite{wangEdge2023}.

2) \textit{Image Offloading}: Following local degradation, each ED transmits its processed image to the ES for inference. To avoid co-channel interference among EDs, we implement frequency division multiple access (FDMA) for uplink transmissions, where the total system bandwidth $W$ is dynamically allocated to EDs. Let $b_{t,n}W$ represent the bandwidth allocated to the ED$_n$ during time slot $t$, where $b_{t,n}\geq 0$ is the fractional allocation variable satisfying that 
\begin{equation}\label{eq_bandconst}
	\sum_{n=1}^{N}b_{t,n}\leq 1, \forall t=1,\cdots,T.
\end{equation}
The transmission rate for ED$_n$ is 
\begin{equation}\label{eq_comrate}
	R_{t,n} = b_{t,n}W\log_2\left(1+\frac{p_nh_{t,n}}{b_{t,n}W\delta^2}\right),
\end{equation}
where $p_n$ is the fixed transmit power at the ED$_n$. $\delta^2$ is the power spectrum density of additive white Guassian noise (AWGN) at the ES. $h_{t,n}$ denotes the channel gain between the ED$_n$ and the ES during time slot $t$. We consider a block fading scenario where $h_{t,n}$ remains constant during time slot $t$ but varies independently across time slots. 

Given the offloading resolution $\bm{r}_{t,n}$ in \eqref{eq_offres}, the offloading data size $d_{t,n}$ in bits for an RGB image with 24 bits/pixel (i.e., 3 color channels/pixel $\times$ 8 bits/channel) is computed as
\begin{equation}\label{eq_offdatasize}
	d_{t,n}=\frac{\iota_n^{\rm w}}{2^{a_{t,n}}}\times \frac{\iota_n^{\rm h}}{2^{a_{t,n}}} \times 24.
\end{equation}
The corresponding offloading time is
\begin{equation}\label{eq_latencyo}
	\tau_{t,n}^{\rm o} = \frac{d_{t,n}}{R_{t,n}}.
\end{equation}

3) \textit{Edge Analytics}: Upon receiving the preprocessed image from ED$_n$, the ES performs object detection using a pre-deployed AI model (e.g., YOLO \cite{redmonYou2016}). The model outputs a set of bounding boxes $\hat{\mathcal{B}}_{t,n}$ defining object locations with confidence values, alongside the edge computational latency $\tau_{t,n}^{\rm c}$. Since the degradation level $a_{t,n}$ determines the input resolution $\bm{r}_{t,n}$, the model outputs $\hat{\mathcal{B}}_{t,n}$ and $\tau_{t,n}^{\rm c}$ critically depend on $a_{t,n}$.

In this paper, we quantify the edge analytics performance through two primary metrics: detection accuracy and edge computational latency. The detection accuracy is quantified by comparing the predicted bounding boxes $\hat{\mathcal{B}}_{t,n}$ against ground truth $\mathcal{B}^{\rm{GT}}_{t,n}$ \cite{wangEdge2023}. Specifically, for each ground truth bounding box $\beta \in \mathcal{B}^{\rm{GT}}_{t,n}$, we compute its maximum Intersection over Union (IoU) with predicted boxes $\hat{\beta} \in \hat{\mathcal{B}}_{t,n}$ as:
\begin{equation}
	\Gamma(\beta,\hat{\mathcal{B}}_{t,n}) = \max_{\hat{\beta} \in \hat{\mathcal{B}}_{t,n}} \dfrac{|\beta \cap \hat{\beta}|}{|\beta\cup\hat{\beta}|},
\end{equation}
where $|\cdot|$ represents set cardinality. The detection accuracy $c_{t,n}$ is then defined as the proportion of ground truth boxes exceeding an IoU threshold $\gamma_{\rm th}$:
\begin{equation}
	c_{t,n} = \frac{1}{|\mathcal{B}^{\rm{GT}}_{t,n}|} \sum_{\beta \in \mathcal{B}^{\rm{GT}}_{t,n}}\mathbb{I}\left(\Gamma(\beta,\hat{\mathcal{B}}_{t,n}) \geq \gamma_{\rm th} \right),
\end{equation}
where $\mathbb{I}(\cdot)$ denotes the indicator function.

The edge computational latency $\tau_{t,n}^{\rm c}$ depends primarily on the original resolution $\bm{r}_n^{\rm o}$ and the degradation level $a_{t,n}$, but also reflects the computational efficiency $\psi$ at the ES. We model this relationship through a device-specific function $f_c(\cdot)$ as:
\begin{equation}\label{eq_latencyc}
	\tau_{t,n}^{\rm c} = f_{\rm c}(\bm{r}_n^{\rm o}, a_{t,n}; \psi).
\end{equation}

After completing inference tasks (e.g., generating bounding boxes), the ES returns the results directly to the originating EDs. Given the substantially smaller data volume of these inference outputs compared to original and degraded input images, we omit feedback latency in our analysis.


\subsection{Problem Formulation}\label{sec2}
In this paper, we address an online offloading control problem to simultaneously maximize the long-term average detection accuracy of EDs and minimize end-to-end inference latency. We define the utility for ED$_n$ in time slot $t$ as
\begin{equation}\label{eq_utility}
	u_{t,n} = c_{t,n} - w_n\tau_{t,n},
\end{equation}
where $w_n\geq0$ is an accuracy-latency tradeoff weighting factor and
\begin{equation}
	\tau_{t,n}=\tau_{t,n}^{\rm d} + \tau_{t,n}^{\rm o} + \tau_{t,n}^{\rm c}
\end{equation}
is the end-to-end inference latency. Our objective is to maximize the long-term average utility through joint optimization of degradation levels and bandwidth allocation:
\begin{subequations}\label{prob1}
	\begin{align}
	\underset{\substack{\bm{a}_t,\bm{b}_t,\forall t}} \max~&\frac{1}{T}\sum_{t=0}^{T-1}\sum^{N}_{n=1} u_{t,n} \label{prob1_obj} \\ 
	\st	
	~~& \smaller{\sum}_{n=1}^{N}b_{t,n}\leq 1, \forall t, \label{prob1_const1}\\
	~~& a_{t,n} \in \mathcal{A}, b_{t,n} \geq 0, \forall n, t. \label{prob1_const2}
	\end{align}
\end{subequations}
Here, $\bm{a}_t=\{a_{t,n},\forall n\}$ denotes the discrete degradation actions, and $\bm{b}_t=\{b_{t,n},\forall n\}$ represents the bandwidth allocation ratios, respectively. \eqref{prob1_const1} ensures feasible bandwidth allocation, while \eqref{prob1_const2} enforces variable boundaries.

During real-time inference, the detection accuracy $c_{t,n}$ remains unavailable due to the absence of ground truth $\mathcal{B}^{\rm{GT}}_{t,n}$. To address this limitation, we leverage the confidence values associated with predicted bounding boxes $\hat{\mathcal{B}}_{t,n}$ from the object detection model. The cumulative confidence metric $\alpha_{t,n}$, defined as the sum of confidence values across all predicted bounding boxes $\hat{\mathcal{B}}_{t,n}$, provides a real-time accuracy proxy strongly correlated with actual detection performance \cite{galanopoulosAutoML2021a}. This enables us to reformulate the following alternative optimization problem {\footnote{{\color{black}While it admits an equivalent per-slot decomposition, the long-term formulation in \eqref{prob2} represents a general starting point. It provides a consistent framework that can naturally accommodate future extensions involving inter-slot dependencies, e.g., device energy budgets and queue stability.}}}:
\begin{subequations}\label{prob2}
	\begin{align}
		\underset{\substack{\bm{a}_t,\bm{b}_t},\forall t} \max~&\frac{1}{T}\sum_{t=0}^{T-1} U_t(\bm{h}_t, \bm{a}_t,\bm{b}_t) \label{prob2_obj} \\ 
		\st	
		~~& \eqref{prob1_const1}, \eqref{prob1_const2},  \label{prob2_const1}
	\end{align}
\end{subequations}
where $\bm{h}_{t}=\{h_{t,n}, \forall n\}$ denotes the channel state of time slot $t$. $U_t(\bm{h}_t, \bm{a}_t,\bm{b}_t) = \sum^{N}_{n=1} \tilde{u}_{t,n}$ is the total utility of all EDs, with per-ED utility defined as
\begin{equation}\label{eq_utility_alpha}
	\tilde{u}_{t,n} = \alpha_{t,n} - w_n\tau_{t,n}.
\end{equation}
At the start of time slot $t$, the ES has perfect knowledge of $\bm{h}_t$ and employs centralized control to determine $\bm{a}_t$ and $\bm{b}_t$. Despite the temporal decoupling in \eqref{prob2} allows independent optimization per slot $t$, solving each per-slot subproblem (i.e., \eqref{prob2} for fixed $t$) faces three interconnected challenges:
\begin{itemize}
	\item \textit{Lack of Content Awareness}: The confidence value $\alpha_{t,n}$ exhibits significant content-dependent variations, yet the dynamic task content (i.e., image content) governing $\alpha_{t,n}$ remains unknown at the ES during optimization. While each ED could evaluate all degradation options to determine the optimal choice, this requires pre-transmitting every possible degraded image version to the ES, which is impractical as it induces prohibitive latency through excessive transmissions.  
	\item \textit{Absence of Explicit Utility Model}: Even with perfect task content knowledge, the reformulated utility function $U_t$ lacks an analytical model. Specifically, cumulative confidence $\alpha_{t,n}$ depends on complex objective detection model behaviors that defy close-form formulation. Meanwhile, degradation latency $\tau_{t,n}^{\rm d}$ and edge computational latency $\tau_{t,n}^{\rm c}$ depend on the implicit hardware efficiency of individual ED and ES, respectively. These empirical relationships must be learned through system measurements, preventing direct gradient-based optimization or analytical solutions.
	\item \textit{Combinatorial Explosion}: The tight coupling between discrete degradation actions $\bm{a}_t$ and continuous bandwidth allocation $\bm{b}_t$ renders \eqref{prob2} an NP-hard MINLP. A conventional MINLP approach would fix discrete actions and optimize continuous variables using established solvers (e.g., CVX \cite{boydConvex2004}), then select the optimal discrete action through enumeration. However, exhaustive search over the $\mathcal{A}^N$ combinations incurs combinatorial explosion. This approach becomes computationally prohibitive for large edge deployments, as complexity grows exponentially with the number of EDs $N$.  	 
\end{itemize}

\begin{figure}
	\centering
	\includegraphics[scale=0.57]{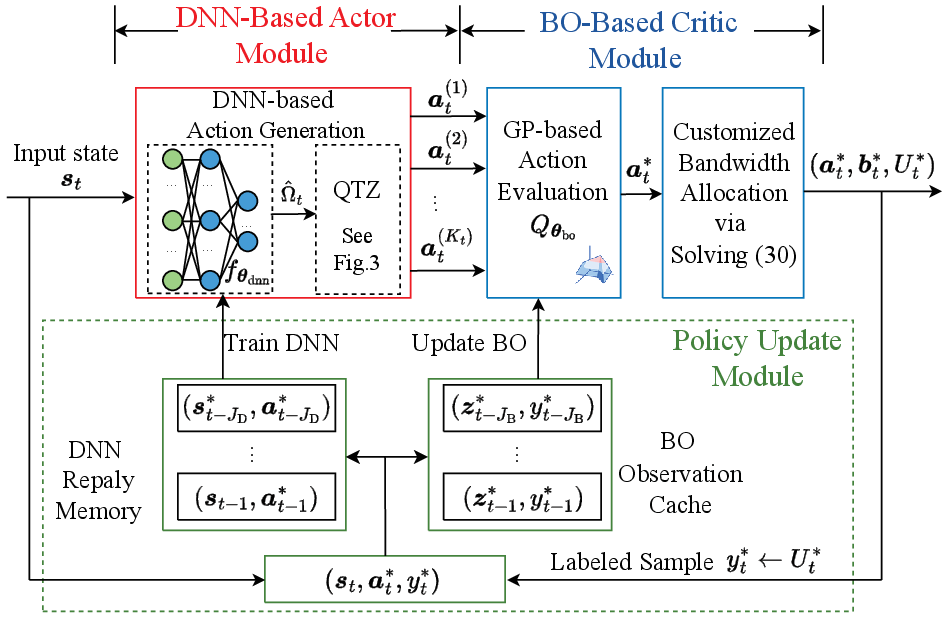}
	\captionsetup{font=footnotesize}
	\caption{The schematics of the proposed LAB framework.}
	\label{fig_LABframework}
\end{figure}

\subsection{Solution Framework Overview}
In this paper, we propose an integrated deep reinforcement \textbf{L}earning \textbf{a}nd  \textbf{B}ayesian optimization framework (LAB) to solve \eqref{prob2}. As shown in Fig. \ref{fig_LABframework}, LAB comprises three interconnected modules: a DNN-based actor parameterized by $\bm{\theta}_{\rm dnn}$, a BO-based critic modeled by $\bm{\theta}_{\rm bo}$, and a policy update module that enables continuous adaptation. During time slot $t$, the actor module first processes the input state $\bm{s}_t$ (including channel states and history experiences, formally defined in Section \ref{sec4}) to generate $K_t$ candidate degradation actions $\{\bm{a}_t^{(k)}\}_{k=1}^{K_t}$, where $\bm{a}_t^{(i)}$ is the $i$-th candidate. These candidates undergo evaluation by the critic module using Gaussian process (GP) modeling, which characterizes utility $U_t$ as a probabilistic function of degradation actions. Building on this probabilistic characterization, the critic computes an acquisition function (detailed in Section \ref{sec3}) to select the optimal candidate $\bm{a}_t^\ast$ that maximizes the acquisition value. Following this selection, it then determines the corresponding bandwidth allocation $\bm{b}_t^\ast$ by solving subproblem, thereby establishing the joint action-resource solution $(\bm{a}_t^\ast, \bm{b}_t^\ast)$. Concluding the operational cycle, the policy update module periodically refines both $\bm{\theta}_{\rm dnn}$ and $\bm{\theta}_{\rm bo}$ using accumulated historical execution data.

As an seamless integration of BO and DRL, the proposed LAB framework delivers three key innovations that directly address the challenges outlined below \eqref{eq_utility_alpha}:
\begin{itemize}
	\item \textit{GP Surrogate for Content Uncertainty Mitigation}: The BO-based critic employs GP to construct a probabilistic utility surrogate from historical observations, which captures the relationship between degradation actions and their expected utility across diverse task contents. By optimizing the acquisition function over this surrogate, LAB selects optimal degradation actions $\bm{a}_t$ without requiring real-time content knowledge or pre-transmission of degraded task images.
	
	\item \textit{Co-Adaptive Optimization of Implicit Utility}: LAB coordinates a DNN-based actor and a BO-based critic for online utility optimization. The BO-based critic pinpoints high-value actions from history observations to guide the actor's policy updates. In turn, the DNN-based actor generates candidate actions that focus the GP model on high-potential regions, enhancing BO efficiency. This co-adaptation enables derivative-free optimization of the utility without an explicit formulation.
	
	\item \textit{Exponential-to-Linear Complexity Reduction}: LAB generates a focused set of $K_t \ll |\mathcal{A}|^N$ candidate degradation actions via the DNN actor. This enables the BO critic to evaluate only $K_t$ options via acquisition computation, avoiding the intractable combinatorial action selection. Consequently, the framework solves the bandwidth allocation subproblem \eqref{prob_bandalloc} only once for the optimal candidate, collectively achieving linear-scale complexity.
	
\end{itemize}

In the following sections, we first present the critic module (Section \ref{sec3}) to establish the action evaluation framework. Building upon this foundation, we then introduce the actor module (Section \ref{sec4}) to generate candidates specifically for this evaluation process. Finally, the policy update module (Section \ref{sec5}) completes the operational cycle of LAB by refining the parameters of both actor and critic.

\section{BO-based Critic Module}\label{sec3}
To address the dual challenges of content uncertainty and utility implicitness in \eqref{prob2}, the BO-based critic module employs GP surrogates to evaluate candidate actions. Crucially, we observe that for a fixed degradation action $\bm{a}_t$, the utility $U_t$ becomes an explicit function of $\bm{b}_t$. As shown in Fig. \ref{fig_LABframework}, this structural insight enables us to decompose the critic module into two sequential stages: 1) GP-based action selection via acquisition function maximization, and 2) customized bandwidth optimization for the selected degradation action. We introduce these two stages in Sections \ref{sec3-1} and \ref{sec3-2}, respectively.

\subsection{GP-based Action Selection}\label{sec3-1}
Let $\mathcal{C}_t = \{\bm{a}_t^{(k)}\}_{k=1}^{K_t}$ denote the candidate action set produced by the DNN-based actor in time slot $t$. For each candidate degradation action $\bm{a}_t\in \mathcal{C}_t$, there exists an optimal bandwidth allocation $\bm{b}_t^\ast$ that maximizes the utility $U_t$ (as formally established in Section \ref{sec3-2}). We therefore define the maximum achievable utility for action $\bm{a}_t$ as $U_t(\bm{h}_t, \bm{a}_t) \triangleq \max_{\bm{b}_t} U_t(\bm{h}_t, \bm{a}_t, \bm{b}_t)$. This allows us to formulate the degradation action selection problem for time slot $t$ as:
\begin{subequations}\label{prob3}
	\begin{align}
		\underset{\substack{\bm{a}_t}} \max~& U_t(\bm{h}_t, \bm{a}_t) \label{prob3_obj} \\ 
		\st	
		~~& \bm{a}_t \in \mathcal{C}_t. \label{prob3_const1}
	\end{align}
\end{subequations} 
Despite this formulation, \eqref{prob3} faces two persistent challenges identified in Section \ref{sec2}: $U_t(\bm{h}_t, \bm{a}_t)$ remains content-dependent and model-free, while exhaustive evaluation of $U_t$ would require executing all $K_t$ candidates, incurring prohibitive latency. To address these limitations, we develop a GP surrogate that enables efficient model-based evaluation for any given $\bm{a}_t$ without on-device execution. The GP-based action selection stage contains two sequential operations: probabilistic modeling of the utility function $U_t(\bm{h}_t, \bm{a}_t)$ through GP regression, and acquisition function optimization leveraging the GP posterior distribution.

\subsubsection{Probabilistic GP Modeling}
To start with, we construct an augmented input vector $\bm{z}_t=\left(\bm{h}_t, \bm{a}_t, t\right)$ and express the utility compactly as $U(\bm{z}_t):= U_t(\bm{h}_t, \bm{a}_t)$. Here, the time index $t$ is explicitly included in $\bm{z}_t$ to account for temporal dynamics introduced by time-varying channel conditions and task content. The observed utility $y_t$ constitutes a noisy measurement of the true utility $U(\bm{z}_t)$:
\begin{equation}
	y_t = U(\bm{z}_t) + \epsilon_t,
\end{equation}
where $\epsilon_t\sim\mathcal{N}(0, \delta_\epsilon^2)$ is an independent and identically distributed (i.i.d) Gaussian noise with variance $\delta_\epsilon^2$. Crucially, $\delta_\epsilon$ is a learnable parameter that quantifies the uncertainty arising from content variations and channel randomness. 

Let $\mathcal{D}_{t-1}^{\rm B}=\{\left(\bm{z}_i, y_i\right)\}_{i=t-J_{\rm B}}^{t-1}$ denote all observations accumulated from the last $J_{\rm B}$ time slots, where $J_{\rm B}\in\{1, \cdots, t-1\}$. Our goal is to infer the utility value $U(\bm{z}_t)$ at the start of a new time slot $t$ using historical data $\mathcal{D}_{t-1}^{\rm B}$. To achieve this, we use a zero-mean GP prior to model the utility function:
\begin{equation}\label{eq_gp_prior}
	U(\bm{z}) \sim \mathcal{GP}(0, \kappa(\bm{z},\bm{z}^\prime)),
\end{equation}
where the covariance kernel function $\kappa(\cdot, \cdot)$ quantifies the similarity between any pair of inputs. Consequently, the joint prior distribution of utility evaluations $\bm{U}_{t-1}:=\left[U(\bm{z}_{t-J_{\rm B}}), \cdots, U(\bm{z}_{t-1})\right]^\top$ at inputs $\bm{Z}_{t-1}:=\left[\bm{z}_{t-J_{\rm B}}, \cdots, \bm{z}_{t-1}\right]^\top$ follows a multivariate Gaussian:
\begin{equation}\label{eq_prior}
	q\left(\bm{U}_{t-1}|\bm{Z}_{t-1}\right) = \mathcal{GP}(\bm{0}_{J_{\rm B}\times 1}, \bm{K}_{t-1}),
\end{equation}
where $\bm{0}_{J_{\rm B}\times 1}$ is a $J_{\rm B}\times1$ zero vector and $\bm{K}_{t-1}$ is a $J_{\rm B}\times J_{\rm B}$ covariance matrix with elements $[\bm{K}_{t-1}]_{j,j^\prime}=\kappa(\bm{z}_{t-j},\bm{z}_{t-j^\prime})$ for $j, j^\prime=1, \cdots, J_{\rm B}$. For the observed outputs $\bm{Y}_{t-1}:=[y_{t-J_{\rm B}}, \cdots, y_{t-1}]^\top$, the likelihood function given $\bm{U}_{t-1}$ and $\bm{Z}_{t-1}$ is
\begin{equation}\label{eq_likelihood}
	q\left(\bm{Y}_{t-1} \mid \bm{U}_{t-1}, \bm{Z}_{t-1}\right) = \mathcal{GP}(\bm{U}_{t-1}, \delta_\epsilon^2\bm{I}_{J_{\rm B}\times J_{\rm B}}),
\end{equation}
where $\bm{I}_{J_{\rm B}\times J_{\rm B}}$ denotes a $J_{\rm B}\times J_{\rm B}$ identity matrix. Combining this likelihood with the prior in \eqref{eq_prior} via Bayes' theorem, we obtain the posterior distribution for a new input $\bm{z}_t$ as \cite{brochuTutorial2010}
\begin{equation}\label{eq_postdist}
	q\left(U(\bm{z}_t)|\mathcal{D}_{t-1}^{\rm B}\right) = \mathcal{N}\left(\mu_{t-1}(\bm{z}_t), \sigma_{t-1}^2(\bm{z}_t)\right).
\end{equation}
The posterior mean $\mu_{t-1}$ and variance $\sigma_{t-1}^2$ are given in closed form by
\begin{equation}
	\mu_{t-1}(\bm{z}_t)=\bm{\kappa}_{t-1}^{\top}\left(\bm{K}_{t-1} + \delta_\epsilon^2\bm{I}_{J_{\rm B}\times J_{\rm B}}\right)^{-1}\bm{Y}_{t-1},
\end{equation}
and
\begin{equation}
	\!\!\sigma_{t-1}^2(\bm{z}_t)\! =\! \kappa(\bm{z}_t, \bm{z}_t)\! - \!\bm{\kappa}_{t-1}^{\top} \left(\bm{K}_{t-1} \!+ \!\delta_\epsilon^2\bm{I}_{J_{\rm B}\times J_{\rm B}}\right)^{-1}\!\bm{\kappa}_{t-1},
\end{equation}
respectively, where $\bm{\kappa}_{t-1}:=[\kappa(\bm{z}_{t-J_{\rm B}}, \bm{z}_t), \cdots, \kappa(\bm{z}_{t-1}, \bm{z}_t)]^\top$ is the covariance vector between $\bm{z}_t$ and the historical inputs.

The effectiveness of the above GP model critically depends on the kernel function $\kappa(\cdot,\cdot)$, which encodes prior information about the structure of utility function $U(\bm{z_t})$. To capture the complex interactions across the heterogeneous components of $\bm{z}_t = (\bm{h}_t, \bm{a}_t, t)$, we design the following composite kernel
\begin{equation}\label{eq_kernel}
	\begin{split}
		\kappa(\bm{z}_i, \bm{z}_{i^\prime}) = \kappa_{\rho}^{\rm{tmp}}(i, {i^\prime})&  \cdot \left[\kappa_{\bm{\theta}_{\rm h}}^{\rm{rbf}}(\bm{h}_i, \bm{h}_{i^\prime}) + \kappa_{\upsilon_{\rm a}}^{\rm{cat}}(\bm{a}_i, \bm{a}_{i^\prime}) \right.\\ 
		&\left.+ \kappa_{\bm{\theta}_{\rm h}}^{\rm{rbf}}(\bm{h}_i, \bm{h}_{i^\prime}) \cdot \kappa_{\upsilon_{\rm a}}^{\rm{cat}}(\bm{a}_i, \bm{a}_{i^\prime}) \right],
	\end{split}
\end{equation}
where $i, i^{\prime}\in\{t-J_{\rm B}, \cdots, t-1\}$ are time indices. The constituent kernels $\kappa_{\bm{\theta}_{\rm h}}^{\rm{rbf}}$, $\kappa_{\upsilon_{\rm a}}^{\rm{cat}}$, and $\kappa_{\rho}^{\rm{tmp}}$ are defined as follows:
\begin{itemize}
	\item Channel State Kernel $\kappa_{\bm{\theta}_{\rm h}}^{\rm{rbf}}$: To capture the similarity in continuous channel states $\bm{h}_t$, we utilize an automatic relevance determination (ARD) radial basis function (RBF) kernel:
	\begin{equation}
		\!\!\!	\kappa_{\bm{\theta}_{\rm h}}^{\rm{rbf}}(\bm{h}_i, \bm{h}_{i^{\prime}})\!=\upsilon_{\rm h}\cdot\!\exp\!\!\left(\!\!-\frac{1}{2}\left(\bm{h}_i\!-\!\bm{h}_{i^{\prime}}\right)^\top \!\!{\rm diag}(\bm{\ell})^{-2}\!\left(\bm{h}_i\!-\!\bm{h}_{i^{\prime}}\right)\!\right),
	\end{equation}
	where $\bm{\theta}_{\rm h}=\{\upsilon_{\rm h}, \bm{\ell}\}$. $\upsilon_{\rm h}$ is an output-scale parameter weighting channel state similarity. The diagonal matrix $\rm{diag}(\bm{\ell})$, formed from the length-scale vector $\bm{\ell} = [\ell_1, \cdots, \ell_N]^\top$, captures the relevance of each dimension in $\bm{h}_t$ for predicting $U_t$.
	\item Degradation Action Kernel $\kappa_{\upsilon_{\rm a}}^{\rm{cat}}$,: To measure the similarity between discrete degradation actions $\bm{a}_t$, we implement a categorical kernel \cite{yanBayesian2024}:
	\begin{equation}
		\kappa_{\upsilon_{\rm a}}^{\rm{cat}}(\bm{a}_i, \bm{a}_{i^{\prime}})=\frac{\upsilon_{\rm a}}{N}\sum_{n=1}^{N}\mathbb{I}\left(a_{i,n}=a_{{i^{\prime},n}}\right),
	\end{equation}
	where $\upsilon_{\rm a}$ is an output-scale parameter weighting categorical similarity.
	\item Temporal Dynamics Kernel $\kappa_{\rho}^{\rm{tmp}}$: To model time-varying correlations arising from evolving channel conditions and content characteristics, we employ a temporal kernel \cite{yanBayesian2024}:
	\begin{equation}
		\kappa_{\rho}^{\rm{tmp}}(i, i^{\prime})=(1-\rho)^{\frac{|i-i^{\prime}|}{2}},
	\end{equation}
	where $\rho\in(0,1)$ denotes the decay rate parameter controlling correlation persistence over time. This formulation ensures more recent observations exert stronger influence on predictions and especially suitable for non-stationary system behavior.
\end{itemize}

\subsubsection{Acquisition Optimization}Building upon the GP model above, we now define the acquisition function that guides the selection of degradation actions $\bm{a}_t$ at time slot $t$. We adopt the upper confidence bound (UCB) criterion \cite{wangRecent2023} to balance exploration of uncertain actions against exploitation of known high-utility actions. Given the posterior distribution in \eqref{eq_postdist}, the UCB acquisition function of input $\bm{z}_t$ is formulated as 
\begin{equation}\label{eq_acqfunc}
	Q_{\bm{\theta}_{\rm{bo}}}(\bm{z}_t) = \mu_{t-1}(\bm{z}_t) + \zeta \cdot \sigma_{t-1}(\bm{z}_t),
\end{equation}
where $\bm{\theta}_{\rm{bo}}=\{\bm{\theta}_{\rm h}, \upsilon_{\rm a}, \rho, \delta_\epsilon\}$ denotes the parameters of the BO-based critic. $\zeta > 0$ is an exploration coefficient that controls the trade-off between exploration and exploitation. The first term $\mu_{t-1}(\bm{z}_t)$ encourages exploitation by favoring actions expected to yield high utility, while the second term $\zeta \sigma_{t-1}(\bm{z}_t)$ promotes exploration of actions with high predictive uncertainty.

At the start of time slot $t$, given channel state information $\bm{h}_t$ at the ES, we optimize $\bm{a}_t$ by solving
\begin{equation}\label{prob_acqmax}
	\bm{a}_t^* = \underset{\bm{a}_t \in \mathcal{C}_t}{\arg\max} \ Q_{\bm{\theta}_{\rm{bo}}}\left( \bm{h}_t, \bm{a}_t, t \right).
\end{equation}
Given the finite candidate set $\mathcal{C}_t = \{{\bm{a}_t^{(i)}}\}_{i=1}^{K_t}$ where $K_t \ll |\mathcal{A}|^N$, we exhaustively evaluate $Q_{\bm{\theta}_{\rm{bo}}}$ over all $\bm{a}_t \in \mathcal{C}_t$ to obtain the optimal degradation action $\bm{a}_t^\ast$. This selected action is subsequently deployed for image offloading, with the optimal bandwidth allocation $\bm{b}_t^\ast$ to be determined in the next subsection.


\subsection{Customized Optimal Bandwidth Allocation under Fixed Degradation Action}\label{sec3-2}
Given a selected degradation action $\bm{a}_t^\ast$, the utility $U_t$ in \eqref{prob2_obj} depends on bandwidth allocation $\bm{b}_t$ only through the offloading time $\tau_{t,n}^{\rm o}$, as $\alpha_{t,n}$, $\tau_{t,n}^{\rm d}$, and $\tau_{t,n}^{\rm c}$ are invariant to $\bm{b}_t$. This motivates introducing $\tau_{t,n}^{\rm o}$'s as auxiliary variables constrained by:
\begin{equation}\label{eq_auxi_const}
	\frac{d_{t,n}}{\tau_{t,n}^{\rm o}} \leq b_{t,n}W\log_2\left(1+\frac{p_nh_{t,n}}{b_{t,n}W\delta^2}\right), \tau_{t,n}^{\rm o}\geq 0, \forall n.
\end{equation}  
Consequently, the per-slot subproblem of \eqref{prob2} reduces to the following convex bandwidth allocation problem:
{\color{black}{\begin{subequations}\label{prob_bandalloc}
	\begin{align}
		\underset{\substack{\bm{b}_t}, \bm{\tau}_t^{\rm o}} \min~& \sum_{n=1}^N w_n \tau_{t,n}^{\rm o}\\
		\st
		~~& \frac{d_{t,n}}{\tau_{t,n}^{\rm o}} \leq b_{t,n}W\log_2\left(1+\frac{p_nh_{t,n}}{b_{t,n}W\delta^2}\right), \label{prob_bandalloc_const1}\\
		~~& \mathsmaller\sum_{n=1}^{N}b_{t,n}\leq 1, \label{eq_bandconst_perslot}\\
		~~& b_{t,n} \geq 0,  \tau_{t,n}^{\rm o}\geq 0, \forall n, \label{prob_bandalloc_const2}
	\end{align}
\end{subequations}}}
where $\bm{\tau}_t^{\rm o}=\{\tau_{t,n}^{\rm o}, \forall n\}$. While standard convex optimization methods (e.g., interior point method \cite{boydConvex2004}) efficiently solve \eqref{prob_bandalloc}, we derive a semi-closed-form solution using Lagrangian duality to characterize optimal resource allocation. {\color{black}{Let $\bm{\phi}=\{\phi_n\geq 0, \forall n\}$ and $\eta \geq 0$ denote Lagrangian multipliers for constraints \eqref{prob_bandalloc_const1} and \eqref{eq_bandconst_perslot}, respectively.}} The partial augmented Lagrangian of \eqref{prob_bandalloc} is
\begin{equation}\label{eq_lagran}
	\begin{split}
		&\bm{\mathcal{L}}(\bm{b}_t, \bm{\tau}_t^{\rm o}, \bm{\phi}, \eta)\\
		&= {\sum}_{n=1}^N w_n \tau_{t,n}^{\rm o} +\eta\left[{\sum}_{n=1}^N b_{t,n} - 1\right] \\
		&+ {\sum}_{n=1}^N\phi_n \left[\frac{d_{t,n}}{\tau_{t,n}^{\rm o}} - b_{t,n}W\log_2\left(1+\frac{p_nh_{t,n}}{b_{t,n}W\delta^2}\right)\right].
	\end{split}
\end{equation}
The optimum of \eqref{prob_bandalloc} is obtained through alternately solving the following bi-level optimization: 
\begin{enumerate}
	\item Obtain the dual function $\mathcal{K}(\bm{\phi}, \eta)$ by solving:
	\begin{subequations}\label{Dual_Func}
		\begin{align}
			\mathcal{K}(\bm{\phi}, \eta)=\underset{\bm{b}_t, \bm{\tau}_t^{\rm o}} \min~&\bm{\mathcal{L}}(\bm{b}_t, \bm{\tau}_t^{\rm o}, \bm{\phi}, \eta)\\
			~\st
			~~&b_{t,n}, \tau_{t,n}^{\rm o}\geq 0, \forall n.
		\end{align}
	\end{subequations}
	\item Solve the dual problem:
	\begin{equation}\label{Dual_Prob}
		\underset{\substack{\bm{\phi}, \eta}} \max~\mathcal{K}(\bm{\phi}, \eta),~~\st~~\bm{\phi}, \eta \geq 0.
	\end{equation}
\end{enumerate}
Let $\{\eta^\ast, \phi_n^\ast,\forall n\}$ denote the optimal dual variables. The optimal solution to \eqref{prob_bandalloc} is then characterized by the following proposition.
\begin{prop}\label{prop1}
	The optimal solution to \eqref{prob_bandalloc}, denoted as $\{b_{t,n}^{\ast}, \tau_{t,n}^{\rm o\ast}\}$,  has the following closed-form structure:
	\begin{align}
		&\tau_{t,n}^{\rm o\ast} = \sqrt{\frac{\phi_n^\ast d_{t,n}}{w_n}},\label{eq_opt_tau}\\
		&b_{t,n}^\ast = \frac{-p_nh_{t,n}}{W\delta^2\left[1+\left(\mathcal{W}\left(-\frac{1}{\exp\left(\frac{\eta^\ast \ln2}{\phi_n^\ast W}+1\right)}\right)\right)^{-1}\right]}, \label{eq_opt_b}
	\end{align}
	where $\mathcal{W}(\cdot)$ is the Lambert-$\mathcal{W}$ function and $\exp(\cdot)$ is the exponential function.
\end{prop}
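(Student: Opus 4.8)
The plan is to show that \eqref{prob_bandalloc} is a convex program satisfying Slater's condition, so that its KKT conditions are both necessary and sufficient for global optimality, and then to solve those conditions in closed form. First I would verify convexity: the objective $\sum_n w_n\tau_{t,n}^{\rm o}$ is linear; the left-hand side $d_{t,n}/\tau_{t,n}^{\rm o}$ of \eqref{prob_bandalloc_const1} is convex on $\tau_{t,n}^{\rm o}>0$, while the right-hand side $b_{t,n}W\log_2(1+p_nh_{t,n}/(b_{t,n}W\delta^2))$ is the perspective of a concave function and hence concave in $b_{t,n}$. Thus each constraint in \eqref{prob_bandalloc_const1} is convex, the remaining constraints are affine, and strict feasibility is immediate (split the bandwidth evenly and take each $\tau_{t,n}^{\rm o}$ large enough). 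Strong duality therefore holds and the optimal primal-dual pair is characterized exactly by stationarity, primal/dual feasibility, and complementary slackness for the Lagrangian \eqref{eq_lagran}.

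Next I would extract the two stationarity conditions. Differentiating \eqref{eq_lagran} with respect to $\tau_{t,n}^{\rm o}$ yields $w_n-\phi_n d_{t,n}/(\tau_{t,n}^{\rm o})^2=0$, which rearranges immediately to \eqref{eq_opt_tau}; this also forces $\phi_n^\ast>0$, so by complementary slackness constraint \eqref{prob_bandalloc_const1} is tight at the optimum (consistent with the fact that one never wastes transmission time). Differentiating with respect to $b_{t,n}$ gives $\eta=\phi_n\,\partial g/\partial b_{t,n}$ with $g(b)=bW\log_2(1+c_n/b)$ and $c_n:=p_nh_{t,n}/(W\delta^2)$; carrying out the derivative produces the transcendental relation $\eta\ln2/(\phi_n W)=\ln(1+c_n/b_{t,n})-\tfrac{c_n/b_{t,n}}{1+c_n/b_{t,n}}$.

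The main obstacle---and the only nontrivial algebra---is inverting this last equation for $b_{t,n}$. My plan is a two-step substitution. Setting $u=1+c_n/b_{t,n}$ collapses the fractional term to $1-1/u$ and turns the equation into $\ln u+1/u=\eta\ln2/(\phi_n W)+1$; setting $v=1/u$ then gives $v-\ln v=\eta\ln2/(\phi_n W)+1$, which I would re-exponentiate as $(-v)e^{-v}=-\exp(-(\eta\ln2/(\phi_n W)+1))$. This matches the defining relation $we^{w}=x\Rightarrow w=\mathcal{W}(x)$, so $v=-\mathcal{W}\!\left(-1/\exp(\eta^\ast\ln2/(\phi_n^\ast W)+1)\right)$. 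Because $\eta^\ast,\phi_n^\ast\ge0$ the argument lies in $[-1/e,0)$, where two real branches exist; I would select the principal branch, which returns a value in $(-1,0)$ and hence guarantees $v\in(0,1)$, equivalently $b_{t,n}^\ast>0$ as required. Finally, unwinding the substitutions via $u-1=c_n/b_{t,n}$ gives $b_{t,n}^\ast=c_n v/(1-v)$; substituting $v=-\mathcal{W}(\cdot)$ and simplifying to $-c_n/(1+\mathcal{W}(\cdot)^{-1})$, then reinserting $c_n=p_nh_{t,n}/(W\delta^2)$, reproduces \eqref{eq_opt_b} and completes the proof. Beyond routine calculus, the only point requiring care is the branch selection, which I would justify by the physical requirement $b_{t,n}^\ast>0$ together with dual feasibility.
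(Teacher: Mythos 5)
Your proposal is correct and follows essentially the same route as the paper's proof: stationarity of the Lagrangian \eqref{eq_lagran} in $\tau_{t,n}^{\rm o}$ gives \eqref{eq_opt_tau} directly, and stationarity in $b_{t,n}$ yields a transcendental equation that the paper also inverts via the Lambert-$\mathcal{W}$ function (your substitution $v=1/u$ with $(-v)e^{-v}=-e^{-K}$ is algebraically equivalent to the paper's form $(1+x)e^{1/(1+x)}=e^{K}$). Your added justification of KKT sufficiency via convexity and Slater's condition, and the principal-branch selection argument, are sound and match what the paper states in the surrounding text rather than inside the proof itself.
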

\begin{proof}
	By taking the derivative of $\mathcal{L}$ in \eqref{eq_lagran} with respect to $\tau_{t,n}^{\rm o}$ and $b_{t,n}$, respectively, we have
	\begin{align}
		&\frac{\partial \bm{\mathcal{L}}}{\partial \tau_{t,n}^{\rm o}} = w_n - \frac{\phi_n d_{t,n}}{\left(\tau_{t,n}^{\rm o}\right)^2}, \label{proof_prop1_eq1}\\
		&\frac{\partial \bm{\mathcal{L}}}{\partial b_{t,n}}\! =\! -\frac{\phi_n W}{\ln 2}\!\left[\ln \!\left(1\!\!+\!\!\frac{p_nh_{t,n}}{b_{t,n}W\sigma^2}\right)\!-\!\frac{p_nh_{t,n}}{b_{t,n}W\sigma^2 \!+\! p_nh_{t,n}}\right]\!\! + \!\!\eta.\label{proof_prop1_eq2}
	\end{align}
	By setting \eqref{proof_prop1_eq1} to 0, we obtain
	\begin{align}
		&\tau_{t,n}^{\rm o\ast} = \sqrt{\frac{\phi_n^\ast d_{t,n}}{w_n}}.
	\end{align}
	By setting \eqref{proof_prop1_eq2} to 0, we have 
	\begin{equation}\label{proof_prop1_eq3}
		\!\!	\left(1\!+\!\frac{p_nh_{t,n}}{b_{t,n}W\sigma^2}\right)\exp\left(\frac{1}{1\!+\!\frac{p_nh_{t,n}}{b_{t,n}W\sigma^2}}\right)\! =\! \exp\left(1\!+\!\frac{\eta \ln2}{\phi_nW}\right).
	\end{equation}
	By solving \eqref{proof_prop1_eq3}, we obtain that 
	\begin{equation}
		b_{t,n}^\ast = \frac{-p_nh_{t,n}}{W\delta^2\left[1+\left(\mathcal{W}\left(-\frac{1}{\exp\left(\frac{\eta^\ast \ln2}{\phi_n^\ast W}+1\right)}\right)\right)^{-1}\right]},
	\end{equation}
	where $\mathcal{W}(\cdot)$ is the Lambert-$\mathcal{W}$ function.
\end{proof}

Proposition \ref{prop1} yields two key insights: First, $\phi_n^\ast>0$ necessarily holds because $\phi_n^\ast=0$ implies $\tau_{t,n}^{\rm o\ast}=0$, leading to unrealizable infinite data rates. Second, $\eta^\ast>0$ must obtain since $\eta^\ast=0$ causes $b_{t,n}^\ast\rightarrow \infty$, violating the bandwidth constraint \eqref{eq_bandconst_perslot}. These conditions ensure equality holds in both \eqref{prob_bandalloc_const1} and \eqref{eq_bandconst_perslot} at optimality, indicating that full utilization of bandwidth resource minimizes offloading time. The Lambert-$\mathcal{W}$ function has two real branches: the principal branch $\mathcal{W}_0(z) \in (-1,0)$ and the secondary branch $\mathcal{W}_{-1}(z) \in (-\infty,-1)$. Given $\phi_n^\ast>0$ and $\eta^\ast>0$, the argument $-\frac{1}{\exp\left(\frac{\eta^\ast \ln2}{\phi_n^\ast W}+1\right)} \in (-\frac{1}{e},0)$. Within this interval, $\mathcal{W}_{-1}(\cdot)$ yields negative values. To guarantee a non-negative $b_{t,n}^\ast$, we take $\mathcal{W}_0(\cdot)$ as the Lambert-$\mathcal{W}$ function in \eqref{eq_opt_b}.

\section{DNN-based Actor Module}\label{sec4}
The BO-based critic in Section \ref{sec3} provides probabilistic action evaluation through GP modeling with acquisition function optimization. While theoretically capable of evaluating all $\mathcal{A}^N$ actions, the exponential growth of the action space with $N$ makes exhaustive evaluation computationally prohibitive for real-time edge systems. Critically, even in computationally tractable cases (e.g., small $N$ where exhaustive search is feasible), the action maximizing the acquisition function would not guarantee optimality for the true utility $U_t$ in \eqref{prob2_obj}, as the acquisition function (e.g., UCB in \eqref{eq_acqfunc}) is a surrogate designed for exploration-exploitation balance rather than direct utility optimization. These dual limitations motivate our DNN-based actor module, which generates a targeted candidate set $\mathcal{C}_t$ through adaptive history distillation. This approach maintains computational feasibility while focusing critic evaluation on high-potential actions, thereby enhancing decision quality beyond standalone acquisition optimization. As shown in Fig. \ref{fig_LABframework}, the actor module decomposes into two sequential stages: 1) DNN-based approximation and 2) customized quantization (QTZ) for candidate generation. We detail these two stages in Sections \ref{sec4-1} and \ref{sec4-2}, respectively.

\subsection{DNN-based Action Approximation}\label{sec4-1}
We begin by formalizing the input state $\bm{s}_t$ of the DNN. At the start of time slot $t$, we define the system observation as $\bm{o}_t=\left(\bm{h}_{t}, \bm{\alpha}_{t-1}, \bm{\tau}_{t-1}, \bm{a}_{t-1}, \bm{b}_{t-1}, U_{t-1}\right)$, where $\bm{\alpha}_{t-1}=\{\alpha_{t-1,n}, \forall n\}$ and $\bm{\tau}_{t-1}=\{\tau_{t-1,n}, \forall n\}$ denote the confidence and the end-to-end inference latency of time slot $t-1$, respectively. The input state aggregates the most recent $l$ observations, forming the history sequence $\bm{s}_t = (\bm{o}_{t-l+1}, \dots, \bm{o}_t)$. For initialization at $t=1$, we set $\bm{\alpha}_{0} = \bm{0}_{N\times1}$, $\bm{\tau}_{0} = \bm{0}_{N\times1}$, $\bm{a}_{0} = \bm{0}_{N\times1}$, $\bm{b}_{0} = \bm{0}_{N\times1}$, and $U_{0} = 0$, where $\bm{0}_{N\times1}$ denotes an $N$-dimensional zero vector.

To enhance the exploration capacity of the DNN, we employ one-hot encoding for the integer degradation action. For each user $n$, the degradation level $a_{t,n}$ is represented as an $A$-dimensional binary vector $\bm{\omega}_{t,n} = [\omega_{t,n,0}, \cdots, \omega_{t,n,A-1}]$, where each element $\omega_{t,n,i}, \forall i\in\mathcal{A}$, is defined by the indicator function:
\begin{subnumcases}{\omega_{t,n,i}=\mathbb{I}(a_{t,n} = i)=}
	1, & $a_{t,n} = i$,\\
	0, &\text{otherwise}.
\end{subnumcases}
The degradation action $\bm{a}_t = \{a_{t,n}\}_{n=1}^N$ is consequently encoded as an $NA\times1$ binary vector $\bm{\Omega}_t = [\bm{\omega}_{t,1}, \cdots, \bm{\omega}_{t,N}]^\top$. The DNN, parameterized by $\theta_{\rm dnn}$ and denoted $f_{\theta_{\rm dnn}}$, approximates a continuous mapping from the input state $\bm{s}_t$ to a relaxed preference
\begin{equation}
	\hat{\bm{\Omega}}_t = f_{\theta_{\rm dnn}}(\bm{s}_t).
\end{equation}
Here, $\hat{\bm{\Omega}}_t$ maintains identical $NA$-dimensional concatenated structure as $\bm{\Omega}_t$, but with real-valued elements $\hat{\omega}_{t,n,i} \in [0,1]$ representing preference scores for each $n \in \{1,\dots,N\}$ and $i \in \mathcal{A}$.

\subsection{Customized Quantization for Candidates Generation}\label{sec4-2}
Building upon the continuous approximation $\hat{\bm{\Omega}}_t$ generated by the DNN in Section \ref{sec4-1}, this subsection develops a customized QTZ mechanism to produce feasible discrete candidate actions $\{\bm{a}_t^{(k)}\}_{k=1}^{K_t}$. We denote $\hat{\bm{\omega}}_{t,n}=[\hat{\omega}_{t,n,0},\cdots, \hat{\omega}_{t,n,A-1}]$ and $\hat{\bm{\Omega}}_t=[\hat{\bm{\omega}}_{t,1},\cdots, \hat{\bm{\omega}}_{t,N}]$. The QTZ process addresses dual objectives: converting the preference scores $\hat{\bm{\omega}}_{t,n}$ into valid integer degradation levels $a_{t,n} \in \mathcal{A}$, while generating diverse high-quality candidates that maintain real-time computational efficiency. Our solution integrates two coordinated operations: 1) quantizing the preference $\hat{\bm{\Omega}}_t$ into $K_t$ binary actions $\{\bm{\Omega}_t^{(k)}\}_{k=1}^{K_t}$ followed by transformation to $\mathcal{C}_t=\{\bm{a}_t^{(k)}\}_{k=1}^{K_t}$, and 2) dynamically adjusting $K_t$ to balance exploration and computational load. We now formalize these two operations sequentially.

\subsubsection{Preference-Guided Candidate Generation}
To translate preference scores into executable actions, we generate $\lfloor K_t/2 \rfloor$ candidates through direct interpretation of $\hat{\bm{\Omega}}_t$, reserving the remaining $K_t - \lfloor K_t/2 \rfloor$ actions for noise-enhanced exploration. 
\begin{itemize}
	\item Direct Interpretation: The first candidate $\bm{\Omega}_t^{(1)}=[\bm{\omega}_{t,n}^{(1)}, \cdots, \bm{\omega}_{t,N}^{(1)}]^\top$ is constructed deterministically by selecting the maximum preference score per device. For each binary vector $\bm{\omega}_{t,n}^{(1)}$, the element $\omega_{t,n,i}^{(1)} = 1$ at $i = \arg\max_{j\in\mathcal{A}} \hat{\omega}_{t,n,j}$, with other elements set to 0. As an example in Fig. \ref{MiLoRA_2}, when $\hat{\bm{\omega}}_{t,N}=[0.75, 0.92,\cdots,0.13]$ with $\hat{\omega}_{t,N,1} = 0.92$ as the maximum, we have $\bm{\omega}_{t,N}^{(1)} = [0,1,\cdots,0]$. Subsequent candidates $\{\bm{\Omega}_t^{(k)}\}_{k=2}^{\lfloor K_t/2 \rfloor}$ are generated through $\lfloor K_t/2 \rfloor-1$ independent probabilistic samplings. For each ED$_n$, we convert the preference scores $\hat{\bm{\omega}}_{t,n}$ to a categorical distribution $\hat{\bm{q}}_{t,n}=\{\hat{q}_{t,n,i}\}_{i=0}^{A-1}$ via softmax normalization:
	\begin{equation}
		\hat{q}_{t,n,i} = \frac{e^{\hat{\omega}_{t,n,i}}}{\sum_{i=0}^{A-1} e^{\hat{\omega}_{t,n,i}}}.
	\end{equation}
	We then sample the degradation level $i$ for each ED$_n$ from the categorical distribution $\hat{\bm{q}}_{t,n}$, setting $\omega_{t,n,i}^{(k)} = 1$ for the selected level and all other elements to 0.
	\begin{figure}
		\centering
		\includegraphics[scale=0.61]{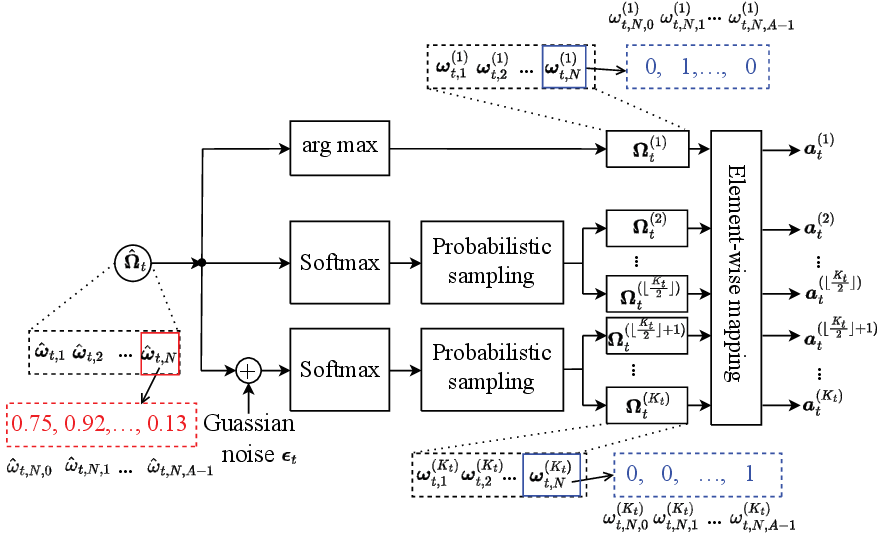}
		\captionsetup{font=footnotesize}
		\caption{The quantizer in the DNN-based actor module.}
		\label{MiLoRA_2}
	\end{figure}
	\item Noise-Enhanced Exploration: The remaining $K_t - \lfloor K_t/2 \rfloor$ candidates leverage Gaussian noise perturbations to explore novel action combinations:
	\begin{equation}
		\hat{\bm{\Omega}}_t^\epsilon = \hat{\bm{\Omega}}_t + \bm{\epsilon}_t, 
	\end{equation}
	where $\bm{\epsilon}_t\in\mathcal{N}(\bm{0}_{NA\times 1}, \bm{I}_{NA\times NA})$ denotes an $NA$-dimensional Gaussian distribution with zero mean and unit variance. Each candidate is generated by applying identical softmax normalization and categorical sampling as in direct interpretation, but to the noise-perturbed outputs $\hat{\bm{\Omega}}_t^{\epsilon}$.
\end{itemize}

The generated binary candidates $\{\bm{\Omega}_t^{(k)}\}_{k=1}^{K_t}$ are then transformed into executable degradation actions $\mathcal{C}_t = \{\bm{a}_t^{(k)}\}_{k=1}^{K_t}$ through element-wise mapping. For each candidate $k$ and ED $n$, the degradation level $a_{t,n}^{(k)}$ is determined by the active position in its one-hot representation $\bm{\omega}_{t,n}^{(k)}$:
\begin{equation}
	a_{t,n}^{(k)} = i \cdot \mathbb{I}(\omega_{t,n,i}^{(k)} = 1),
\end{equation}
where $i \in \mathcal{A} = \{0, \dots, A-1\}$. This yields candidate vectors $\bm{a}_t^{(k)} = [a_{t,1}^{(k)}, \dots, a_{t,N}^{(k)}]^\top$. As exemplified in Fig. \ref{MiLoRA_2} for EN$_N$, when $\bm{\omega}_{t,N}^{(1)} = [0,0,\dots,1]$, the transformation gives $a_{t,N}^{(1)} = 1$.
 
\begin{Remark}
	The preference-guided candidate generation balances exploitation and exploration via a dual-stream architecture, enabled by one-hot encoding. It expands the integer action space into a higher-dimensional binary domain, enhancing candidate diversity while enforcing action feasibility \cite{okadaEfficient2019}. Within this expanded space, the direct interpretation stream exploits DNN knowledge to sample top-preference actions that concentrates on high-value solutions. Simultaneously, The noise-enhanced stream discovers novel actions via controlled perturbations, focusing exploration near high-confidence predictions. The coordinated interaction between expanded representation and dual-stream mechanisms enables efficient solution exploration and exploitation.
\end{Remark}

\subsubsection{Adaptive Candidate Set Adjustment}
The candidate set size $K_t$ critically influences the efficacy of LAB framework. While larger candidate sets theoretically enable more comprehensive exploration of the action space, they substantially increase the evaluation time of acquisition function. Furthermore, expansion of $K_t$ does not guarantee monotonic performance improvement since the acquisition function serves as a surrogate optimizer rather than directly maximizing the true utility $U_t$. This calls for an adaptive mechanism that dynamically balances exploration breadth against computational burden.

In this paper, we implement an adaptive $K_t$ mechanism updated every $\Delta_{\rm K}>0$ time slots. We begin by quantifying the alignment between candidate actions and the DNN preference through the $L_2$-distance metric:
\begin{equation}
	v_t^{(k)} = \|\bm{\Omega}_t^{(k)} - \hat{\bm{\Omega}}_t\|_2.
\end{equation}
The candidates are sorted in ascending order of $v_t^{(k)}$, yielding an ordered sequence $\{\bm{\Omega}_t^{(1)}, \bm{\Omega}_t^{(2)}, \dots, \bm{\Omega}_t^{(K_t)}\}$, where $\bm{\Omega}_t^{(1)}$ denotes the action closest to the DNN preference. The optimal candidate $\bm{a}_t^\ast$ is then determined by solving \eqref{prob_acqmax} over this ordered set. Let $k_t^\ast \in \{1, \dots, K_t\}$ represent the position index of $\bm{a}_t^\ast$ within the sorted sequence. For time slot $t$, we update $K_t$ as
\begin{subnumcases}{K_t=\!\label{m0_update}}
	K_1, \!\!\!\!& $t=1$,\\
	\min\!\left(\!\max\!\left(\!\left\{k^\ast_{i}\right\}_{i=t-1}^{\Delta_{\rm K}}\!\right)\!\!+\!\!1,K_1\right), \!\!\!\!& $t \!\!\!\!\!\mod\! \Delta_{\rm K} \!\!=\!\! 0$,\\
	K_{t-1},\!\!\!\! &\text{otherwise},
\end{subnumcases}
where $K_1$ is initialized as a positive integer at time slot $t=1$. Intuitively, $K_t$ is primarily determined by $k_t^\ast$. A higher $k_t^\ast$ indicates that optimal actions reside in less explored regions farther from current preferences, suggesting the need for candidate set expansion; a lower $k_t^\ast$ suggests high-utility actions are clustered near existing preferences, enabling potential set reduction.

\section{Policy Update Module} \label{sec5}
The policy update module facilitates continuous performance enhancement in LAB by dynamically optimizing the parameters for both the actor (DNN) and critic (BO) modules. Leveraging accumulated historical execution data, it periodically retrains the DNN-based action generator and re-calibrates the GP model underpinning the BO acquisition function. The following subsections detail the parameter optimization procedures for both the actor and critic modules.

\subsection{Parameters update of DNN}
We implement an experience replay mechanism to update the DNN policy. Upon completion of time slot $t$, the optimal state-action pair $(\bm{s}_t, \bm{a}_t^\ast)$ is stored in a finite-capacity replay memory of size $J_{\rm D}>0$. Once the memory reaches capacity, older samples are replaced by newer entries. The training of DNN commences when the memory contains at least $J_{\rm D}/2$ samples and recurs periodically every $\Delta_{\rm D}>0$ slots. When $t \mod \Delta_{\rm D}=0$, we uniformly sample a minibatch $\{(\bm{s}_\nu, \bm{\Omega}_{\nu}^\ast), \nu\in \mathcal{T}_t^{\rm D}\}$ from the memory, where $\bm{\Omega}_{\nu}^\ast$ is the binary representation of $\bm{a}_{\nu}^\ast$ and $\mathcal{T}_t^{\rm D}$ denotes selected temporal indexes, respectively. The DNN parameter $\bm{\theta}_{\rm dnn}$ is updated by minimizing the binary cross-entropy loss via the Adam optimizer:
\begin{equation}
	\begin{split}
		Loss(\bm{\theta}_{\rm dnn}) = -\frac{1}{J_{\rm D}^{\rm s}}\sum_{\nu\in \mathcal{T}_{\rm D}}\left[\left(\bm{\Omega}_{\nu}^\ast\right)^{\top}\log f_{\bm{\theta}_{\rm dnn}}\left(\bm{s}_\nu\right) \right. \\
		\left.+\left(1-\bm{\Omega}_{\nu}^\ast\right)^{\top}\log\left(1-f_{\bm{\theta}_{\rm dnn}}\left(\bm{s}_t\right)\right)\right],
	\end{split}
\end{equation}
where $\log$ denotes the element-wise logarithm, and $J_{\rm D}^{\rm s} = |\mathcal{T}_t^{\rm D}|$ is the minibatch size. Following parameter update, the actor module utilizes the new $\bm{\theta}_{\rm dnn}$ to generate candidate actions. This procedure enables the DNN to continuously assimilate high-quality state-action mappings from recent experiences, progressively improving decision effectiveness.

\subsection{Parameters update of BO}
The BO parameter $\bm{\theta}_{\rm{bo}}=\{\bm{\theta}_{\rm h}, \upsilon_{\rm a}, \rho, \delta_\epsilon\}$ is updated through a dedicated observation cache of size $J_{\rm B}>0$. At the end of time slot $t$, the optimal degradation action $\bm{a}_t^\ast$, channel state $\bm{h}_t$, and temporal index $t$ collectively form the augmented input $\bm{z}_t^\ast = (\bm{h}_t, \bm{a}_t^\ast, t)$. While the solution to the degradation action selection problem \eqref{prob3} yields an exact value $U_t^\ast$ for the input $\bm{z}_t^\ast$, this value represents a noisy observation $y_t^\ast$ of the utility function due to the unknown system dynamics. Consequently, we set $y_t^\ast \leftarrow U_t^\ast$ and store the tuple $(\bm{z}_t^\ast, y_t^\ast)$ in the BO cache, with older entries evicted upon overflow. Periodically every $\Delta_{B}$ time slots, i.e., when $t \mod \Delta_{B}=0$, the parameter $\bm{\theta}_{\rm bo}$ is re-optimized by maximizing the log marginal likelihood over the cached dataset $\mathcal{D}_t^{\rm B}=\{(\bm{z}_i^\ast, U_i^\ast)\}_{i=t-J_{\rm B+1}}^{t}$ \cite{brochuTutorial2010}:
\begin{equation}\label{prob_boupdate}
	\bm{\theta}_{\rm bo} \leftarrow \arg\max_{\bm{\theta}} \log q\left(\bm{Y}_t \mid \bm{Z}_t\right),
\end{equation}
where $q\left(\bm{Y}_t \mid \bm{Z}_t\right)\sim \mathcal{N}(\bm{0},\bm{K}_t+\delta_\epsilon^2\bm{I})$ integrates the GP prior \eqref{eq_gp_prior} and likelihood \eqref{eq_likelihood}, and has the following closed-from log expression:
\begin{equation}\label{eq_marglikeli}
	\begin{split}
	\log q\left(\bm{Y}_t \mid \bm{Z}_t\right)=&-\frac{1}{2}\left(\bm{Y}_t\right)^\top\left(\bm{K}_t + \delta_\epsilon^2\bm{I}_{J_{\rm B}\times J_{\rm B}}\right)^{-1}\bm{Y}_t \\
	&- \frac{1}{2}\log|\bm{K}_t + \delta_\epsilon^2\bm{I}_{J_{\rm B}\times J_{\rm B}}|.
	\end{split}
\end{equation}
By maximizing \eqref{eq_marglikeli} via gradient-based method (e.g., quasi-Newton method), we obtain the optimized BO parameter $\bm{\theta}_{\rm bo}$. This updated parameter is then used in the acquisition function during the next time slot $t+1$ to select the optimal degradation action.

\begin{Remark}
	LAB fundamentally redefines the actor-critic paradigm through a hybrid architecture that integrates DRL with BO. Unlike conventional actor-critic frameworks where both components utilize DNNs, LAB employs a BO critic to guide a DNN-based actor. Specifically, the BO critic leverages GP posterior with estimated mean and uncertainty to derive high-value actions, thereby enabling uncertainty-aware explorations which is unavailable in conventional DNN critics. Concurrently, the DNN actor learns from these high-value actions to progressively improve its action generation, and thus achieving data-efficient policy update.
\end{Remark}

\section{Simulation Results}\label{sec6}
%

In this section, we evaluate the system performance through numerical simulations, considering an EI system comprising one ES and $N=3$ EDs by default. The ES employs YOLOv11x \cite{jocherUltralytics2025} for object detection tasks, processing images from the UrbanRoad Self-Driving dataset \cite{Self2025}. This dataset contains $M=15,000$ RGB video frames at native resolution  $(\iota_n^{\rm w},\iota_n^{\rm h})=(1920,1200)$, where each frame represents an input image. At the time slot $t=0$, each ED$_n$ initializes with a uniform random starting frame index $m_{0,n} \sim \mathcal{U}[0, M-1]$, then sequentially advances through the dataset such that $m_{t,n} = (m_{0,n} + t) \mod M$ each time slot. For each image, ED$_n$ applies Guassian pyramid downsampling across $A=4$ degradation levels. The IoU threshold is set as $\gamma_{\rm th}=0.5$.

To model autonomous driving scenarios, ED mobility follows a 100 m $\times$ 50 m rectangular trajectory centered at the ES location. The system operates over $T=3,000$ discrete time slots. At $t=0$, initial positions of EDs are randomly distributed along a designated 100-meter edge of the rectangle. Between consecutive time slots, ED positions progress in discrete 2.5-meter increments counterclockwise along the rectangular trajectory. We adopt Rayleigh fading channels for wireless communications between EDs and ES. The channel gain for ED$_n$ at time slot $t$ is modeled as $h_{t,n} = \varsigma \cdot \bar{h}_{t,n}$, where $\varsigma$ follows an exponential distribution with unit mean. The average channel gain $\bar{h}_{t,n}$ is defined as $\bar{h}_{t,n} = G_A \left( \frac{3 \times 10^8}{4 \pi f_{\rm c} d_{t,n}} \right)^{\lambda}$, where $G_A=4.11$ denotes antenna gain, $f_{\rm c}=2.4$ GHz is the carrier frequency, $d_{t,n}$ represents the Euclidean distance between ED$_n$ and ES at time $t$, and $\lambda=2.4$ by default is the path-loss exponent. We consider homogeneous EDs with identical transmit power $p_n=0.1$ watt, and weighting factors $w_n=w=1$ by default, $\forall n$. The total bandwidth is $W=5$ MHz. The noise power spectrum density at the ES is $\delta^2=-174$ dBm/Hz. {\color{black}{In practice, the degradation latency $\tau_{t,n}^d$ in \eqref{eq_latencyd} and the edge computation latency $\tau_{t,n}^c$ in \eqref{eq_latencyc} are hard to model precisely with closed-form functions under dynamic hardware conditions. Instead, we measure these latencies in real time during execution.}}

For the proposed framework LAB, the actor module employs a Transformer-based DNN consisting of three core components: a fully-connected input embedding layer that generates 64-dimensional embeddings, a 2-layer Transformer encoder with 4 attention heads per layer and 256-dimensional feed-forward networks, and a fully-connected output layer using sigmoid activation to produce $\hat{\bm{\Omega}}_t$. The default parameters of DNN include: input state history length $l=1$, replay memory size $J_{\rm D}=512$, minibatch size $J_{\rm D}^{\rm s}=128$, initial number of candidates $K^0=\min\{8N, A^N\}$ with update interval $\Delta_{\rm K}=32$, DNN training interval $\Delta_{\rm D}=20$, and the Adam optimizer learning rate $\xi=0.01$. The critic module employs the L-BFGS-B method to update the BO parameters, with defaults including observation cache size $J_{\rm B}=256$ and BO update interval $\Delta_{B}=20$. {\color{black}{The exploration coefficient in the UCB acquisition function \eqref{eq_acqfunc} is $\zeta=\sqrt{0.2}$. Remarkably, even with this non-optimized $\zeta$ setting, our proposed LAB framework consistently outperforms both the standalone BO and state-of-the-art (SOTA) DRL baselines, as demonstrated later in Sec \ref{SecVIIB}.}} All the simulations are conducted on an Intel(R) Xeon(R) Silver 4110 (2.10 GHz) machine with a Tesla P40 GPU. 

\subsection{Parameter Sensitivity Analysis of LAB}

\begin{figure}
	\centering
	\includegraphics[scale=0.5]{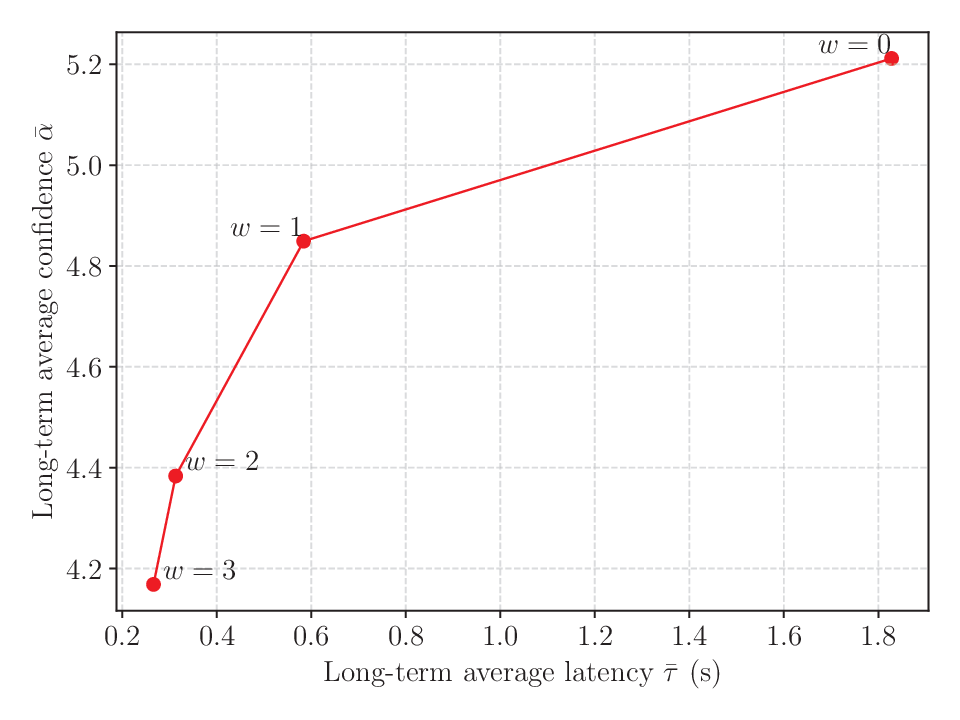}
	\captionsetup{font=footnotesize}
	\caption{The confidence-latency trade-off of LAB. }
	\label{sim_conf_vs_latency_w}
\end{figure}

In Fig. \ref{sim_conf_vs_latency_w}, we demonstrate the confidence-latency trade-off of LAB under different delay weighting factors. Specifically, we compute the long-term average confidence $\bar{\alpha}=\frac{1}{T}\sum_{t=0}^{T-1}\sum_{n=1}^{N}\alpha_{t,n}$ and the long-term average latency $\bar{\tau}=\frac{1}{T}\sum_{t=0}^{T-1}\sum_{n=1}^{N}\tau_{t,n}$. {\color{black}{Note that a higher $\bar{\alpha}$ generally indicates more or more confident detections.}} As $w$ increases from $0$ to $3$, LAB increasingly prioritizes latency reduction, resulting in lower $\bar{\tau}$ at the expense of reduced $\bar{\alpha}$. Notably, increasing $w$ from 0 to 1 reduces $\bar{\tau}$ dramatically by 68.3\% (from 1.83 s to 0.58 s) while only decreasing $\bar{\alpha}$ by 6.9\% (from 5.21 to 4.85). Therefore, we select $w=1$ for all subsequent simulations as it maintains high confidence without incurring substantial latency penalties.

\begin{figure}
	\centering
	\includegraphics[scale=0.45]{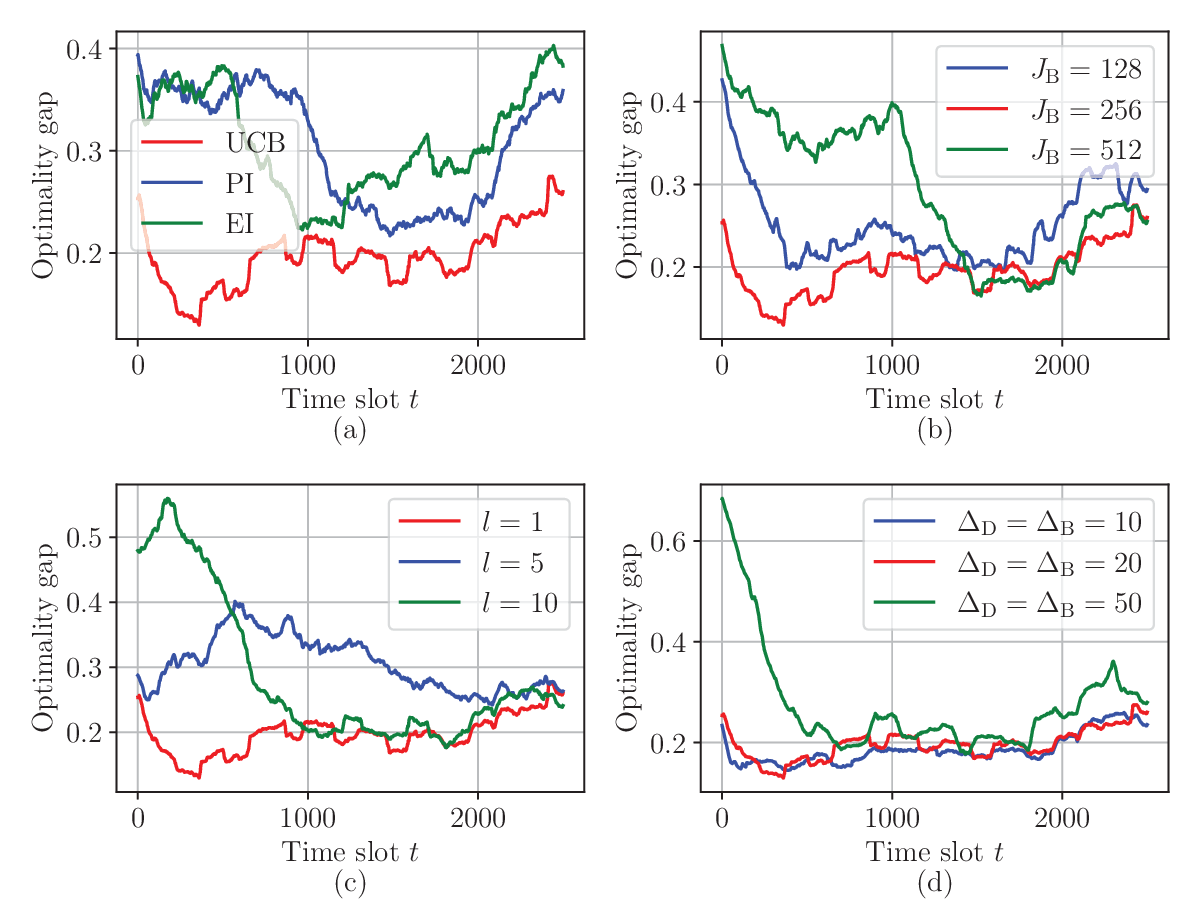}
	\captionsetup{font=footnotesize}
	\caption{Optimality gap under different LAB parameters: (a) acqusition function; (b) BO cache size; (c) history length of input state; and (d) update interval.}
	\label{sim_optgap_vs_t_under_paras}
\end{figure}

In Fig. \ref{sim_optgap_vs_t_under_paras}, we analyze parameter sensitivity of LAB through the optimality gap metric $U_t^{\ast}-U_t$, where $U_t^{\ast}$ represents the theoretical maximum utility obtained through exhaustive search over all $A^N$ degradation actions for \eqref{prob3}, with results window-averaged over 500 time slots. Fig. \ref{sim_optgap_vs_t_under_paras}(a) reveals that UCB acquisition function consistently outperforms expected improvement (EI) and probability of improvement (PI) alternatives by more effectively balancing exploration-exploitation trade-offs. {\color{black}{Note that the fluctuations in the optimality gap, such as the noticeable rise around $t=2000$, stems from the non-stationary nature of the real-world dataset}}. For the BO cache in Fig. \ref{sim_optgap_vs_t_under_paras}(b), a size of $J_{\rm B}=256$ achieves the minimal optimality gap, while a smaller cache ($J_{\rm B}=128$) exhibits degraded performance due to insufficient historical data utilization, and a larger cache ($J_{\rm B}=512$) introduces noisy temporal correlations that compromise decision quality. State history configuration in Fig. 4(c) reveals that $l=1$ (current state only) delivers optimal performance, whereas extended histories (e.g., $l=10$) provides only marginal asymptotic improvement at increased computational overhead. Besides, we notice from Fig. 4(d) that while reducing $\Delta_{\rm D}$ and $\Delta_{\rm B}$ generally lowers optimality gaps, the improvement plateaus below 20 time slots. Further reduction to $\Delta_{\rm D}=\Delta_{\rm B}=10$ provides only marginal gains while substantially increasing computational load. To balance system performance with implementation efficiency, we adopt UCB acquisition, $J_{\rm B}=256$, $l=1$, and $\Delta_{\rm D}=\Delta_{\rm B}=20$ as the default configuration.

{\color{black}
\subsection{Ablation Study}
\begin{figure}
	\centering
	\includegraphics[scale=0.5]{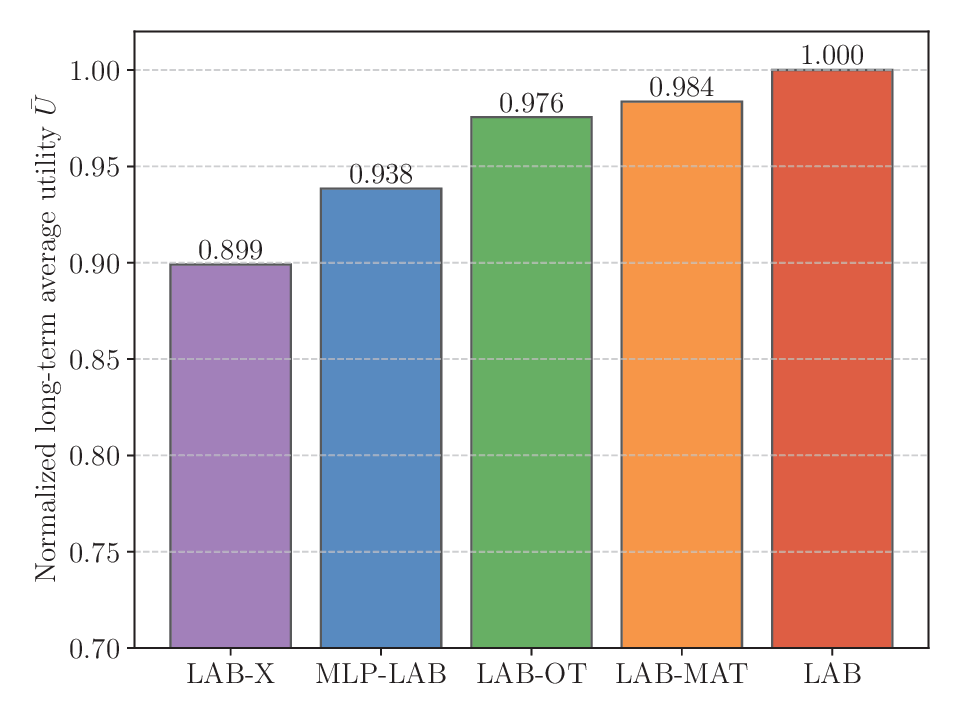}
	\captionsetup{font=footnotesize}
	\caption{{\color{black}Normalized long‑term average utility of LAB and its ablated variants: MAT ($\mathrm{Mat\acute{e}rn}$ kernel), OT (no temporal kernel), MLP (MLP-based actor), and X (BO with contextual input).}}
	\label{sim_ablation}
\end{figure}

{To validate the design choices of our LAB framework, we conduct an ablation study, with results presented in Fig. \ref{sim_ablation}. We compare the long-term average utility of our default LAB against four ablated variants: LAB‑MAT (replaces the RBF kernel with a $\mathrm{Mat\acute{e}rn}$ kernel), LAB‑OT (removes the temporal kernel), MLP‑LAB (substitutes the Transformer-based actor with a multi‑layer perceptron), and LAB‑X (augments the kernel with image-content features). The results are normalized to that of LAB. We observe utility drops of 1.64\% for LAB-MAT, 2.45\% for LAB-OT, and 6.26\% for MLP-LAB, respectively. These results confirm the effectiveness of our kernel design and actor architecture. Notably, LAB-X exhibits a more pronounced degradation of 10.09\%. This counter-intuitive result likely stems from the dynamic and non‑stationary image content in the real‑world dataset, where rapidly changing scenes make historical content information misleading rather than informative.}}

\subsection{System Performance Comparison} \label{SecVIIB}
To validate the efficacy of LAB, we evaluate it against five representative benchmarks:
\begin{itemize}
	\item IDEAL: Selects degradation actions by exhaustive search over all $A^N$ possibilities per slot to directly maximize utility $U_t$. While providing the theoretical performance upper bound for problem \eqref{prob2}, its requirement of completing full action-space evaluation before decision-making makes it computationally infeasible for practical implementation.
	\item Standard BO (BO) \cite{galanopoulosAutoML2021a}: Generates degradation actions by maximizing the acquisition function \eqref{eq_acqfunc} per slot through full enumeration of all $A^N$ possibilities.
	\item DBAG \cite{wangEdge2023}: A SOTA DRL method that formulates problem \eqref{prob2} as an MDP and applies soft actor-critic for degradation action optimization.
	\item Delay-oblivious offloading (DelayObli): A detection-optimal strategy where all EDs persistently select the lowest degradation level $a_{t,n}=0$, transmitting original full-resolution images without degradation.
	\item Delay-minimizing offloading (DelayMin): An offloading-latency-optimal approach where all EDs consistently select the highest degradation level $a_{t,n}=A-1$, transmitting minimally-sized images to the ES.
\end{itemize}
{\color{black}{For a fair comparison, all training-based methods (BO, DBAG, and our proposed LAB) are evaluated under an online training paradigm without using any pre-trained models. Besides, we strictly adhere to the original neural network architecture of DBAG \cite{wangEdge2023} for its actor and critic networks.}} Following discrete degradation action selection, all benchmark methods derive optimal bandwidth allocations by solving subproblem \eqref{prob_bandalloc}. While commercial CVX solvers such as MOSEK \cite{apsMOSEK2025} guarantee optimal solutions for \eqref{prob_bandalloc}, they incur substantial computational overhead when deployed on an ES equipped with low-frequency CPU. This limitation is especially pronounced for the IDEAL method, which requires solving \eqref{prob_bandalloc} for $A^N$ times per slot due to exhaustive search. {\color{black}{To mitigate computational complexity, we implement a DNN-based bandwidth allocation solver following \cite{wangEdge2023}, which efficiently approximates solutions to \eqref{prob_bandalloc}.}} Fig. \ref{sim_ideal_cvx_vs_dnn} compares the performance of MOSEK and our DNN solver when implementing the IDEAL method over 1,000 time slots. The results demonstrate that the DNN solver achieves almost identical per-slot utility to the optimal solution. Based on this validation, we employ the DNN solver for all subsequent method evaluations.  We conduct Monte Carlo simulations across all methods using 10 random seeds. Each simulation spans $T=3,000$ time slots, with final results averaged over all seeds.


\begin{figure}
	\centering
	\includegraphics[scale=0.5]{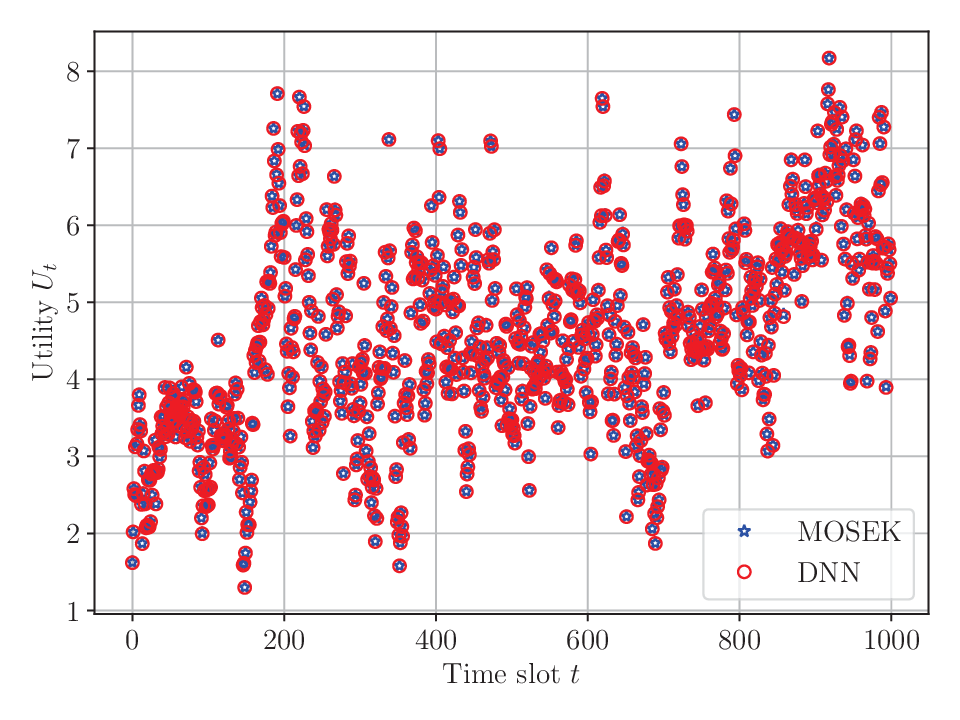}
	\captionsetup{font=footnotesize}
	\caption{Per-slot utility of DNN-based solver for bandwidth allocation.}
	\label{sim_ideal_cvx_vs_dnn}
\end{figure}

\begin{figure}
	\centering
	\includegraphics[scale=0.45]{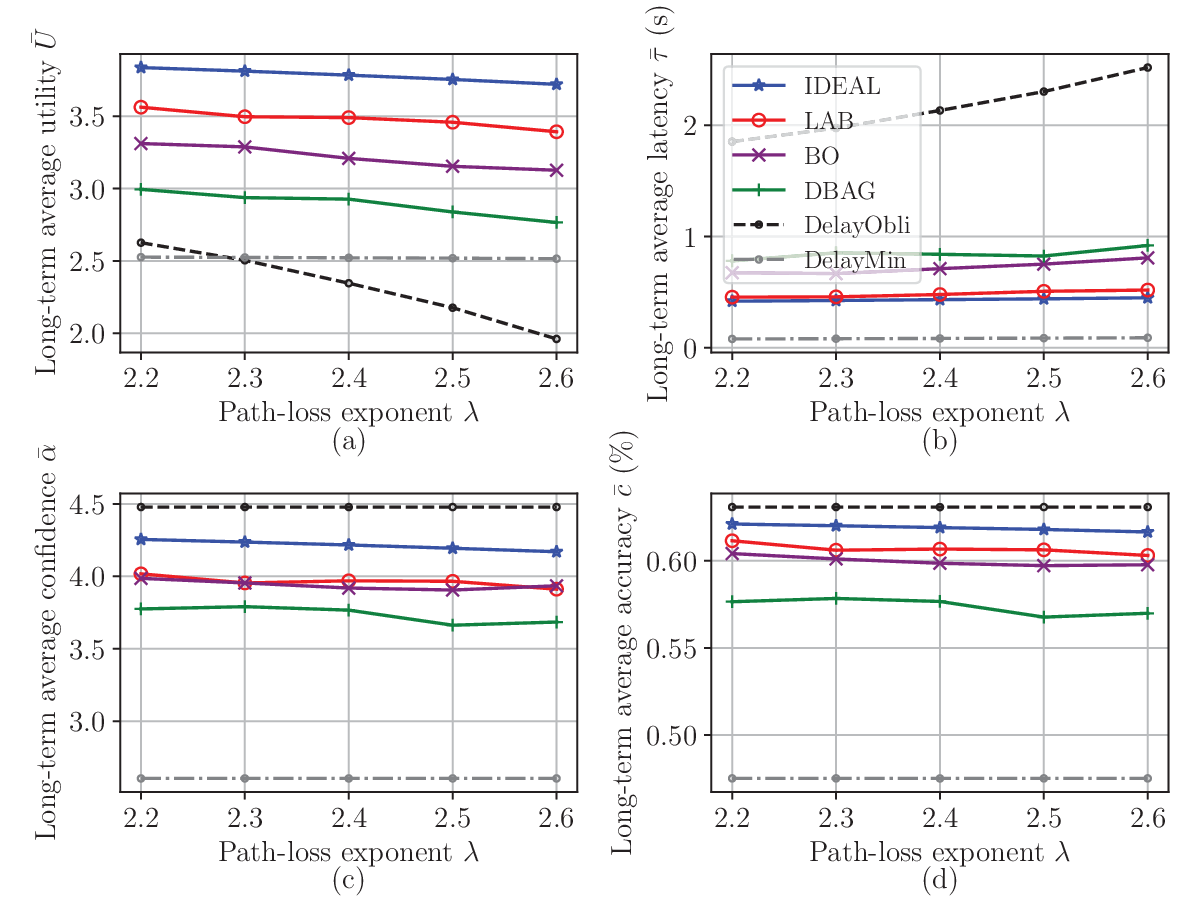}
	\captionsetup{font=footnotesize}
	\caption{The system performance under various path-loss exponent $\lambda$.}
	\label{Sim_perform_vs_pathloss}
\end{figure}

We first evaluate the system performance under varying path-loss exponent $\lambda$ in Fig. \ref{Sim_perform_vs_pathloss}. As shown in Fig. \ref{Sim_perform_vs_pathloss}(a), the long-term average utility $\bar{U} = \frac{1}{T}\sum_{t=0}^{T-1}U_t$ decreases monotonically with $\lambda$ across all methods due to deteriorating channel conditions. Non-adaptive strategies (DelayObliv and DelayMin) yield the lowest utilities as they fail to balance detection quality against transmission latency. LAB achieves an average of 7.98\% optimality gap of utility relative to IDEAL, outperforming BO and DBAG by 6.95\% and 15.55\% in gap reduction, respectively. Fig. \ref{Sim_perform_vs_pathloss}(b) presents the long-term average latency $\bar{\tau}$. As expected, DelayMin maintains minimal latency through persistent highest-degradation transmissions, while DelayObli incurs maximum latency due to full-resolution image offloading without any degradation. LAB exhibits marginally higher latency than IDEAL (within 0.07 s across all $\lambda$ values), yet reduces latency by 32.96\% and 42.60\% on average compared to BO and DBAG. Figs. \ref{Sim_perform_vs_pathloss}(c) and (d) simultaneously depict long-term average confidence $\bar{\alpha}$ and accuracy $\bar{c} = \frac{1}{T}\sum_{t=0}^{T-1}\sum_{n=1}^{N}c_{t,n}$, both exhibiting similar trends across $\lambda$ variations. Focusing on practical detection performance, LAB achieves near-ideal accuracy with only 1.22\% average deficit relative to IDEAL, while surpassing BO and DBAG by 0.7\% and 3.29\%, respectively. Collectively, these results demonstrate the ability of LAB to provide enhanced detection accuracy while simultaneously reducing latency compared to both BO and DBAG.

\begin{figure}
	\centering
	\includegraphics[scale=0.45]{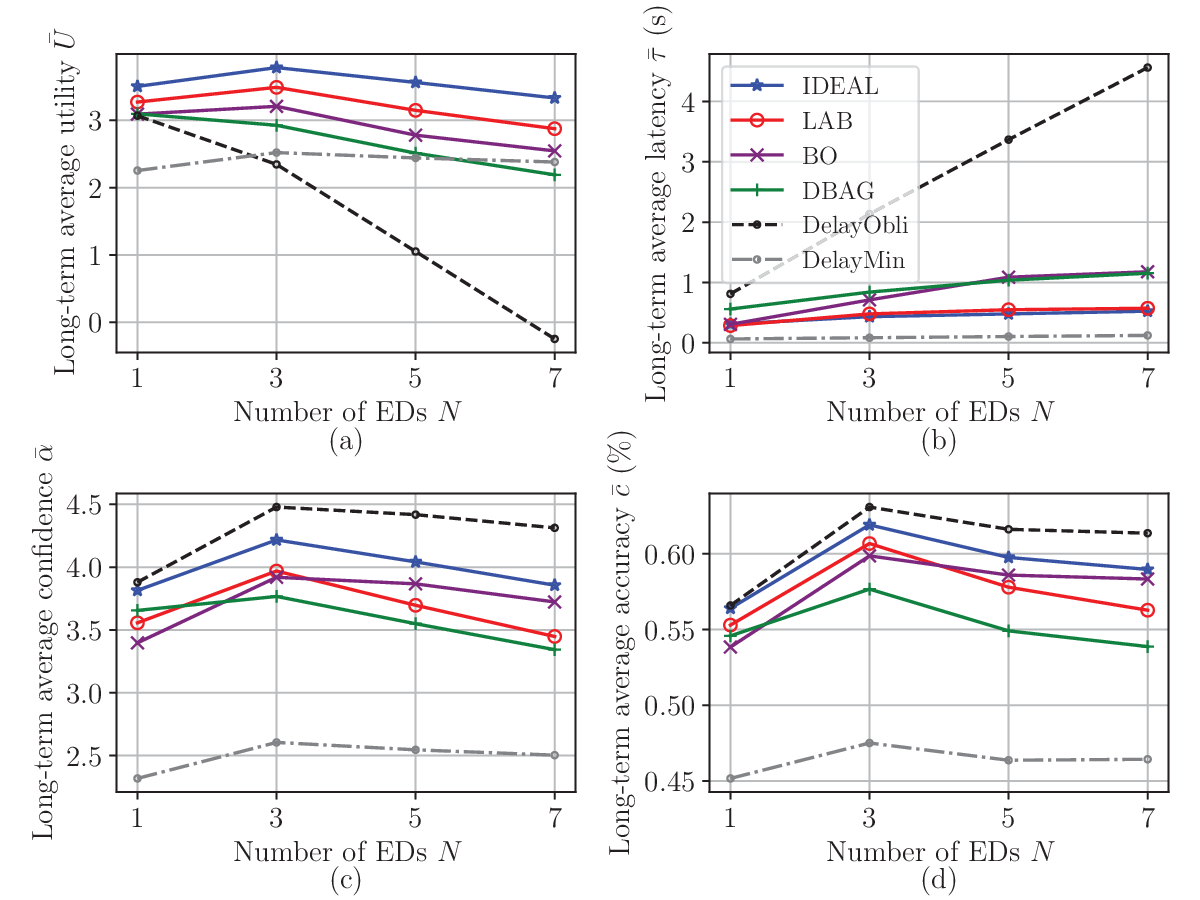}
	\captionsetup{font=footnotesize}
	\caption{The system performance under various number of EDs $N$.}
	\label{Sim_perform_vs_N}
\end{figure}

In Fig. \ref{Sim_perform_vs_N}, we examine the system scalability under varying numbers of EDs $N$. In Fig. \ref{Sim_perform_vs_N}(a), we observe a notable crossover between DelayObliv and DelayMin at $N=3$. This occurs because fixed total bandwidth allocation reduces available bandwidth per ED as $N$ increases, causing DelayObliv's full-resolution transmissions to incur prohibitive latency. Simultaneously, DBAG exhibits diminishing utility with network expansion, ultimately performing below DelayMin at $N=7$. In contrast, LAB achieves an average optimality utility gap of 9.94\% relative to IDEAL, improving upon BO and DBAG by 8.21\% and 14.58\% on average, respectively. Fig. \ref{Sim_perform_vs_N}(b) reveals escalating latency across all methods with rising $N$. LAB maintains latency comparable to IDEAL (0.04 s higher on average), while achieving substantial latency reductions relative to BO (35.08\% lower) and DBAG (47.35\% lower) at $N=7$. Figs. \ref{Sim_perform_vs_N}(c) and (d) demonstrate similar confidence and accuracy trends under network scaling. Among all the considered methods, DelayObliv maintains the highest accuracy by consistently transmitting full-resolution images, while DelayMin yields the lowest accuracy through persistent high-level degradation. LAB maintains an average of 1.75\% accuracy below IDEAL. Besides, it matches BO's accuracy within 0.15\% and surpasses DBAG by 2.25\% on average. These results verify the ability of LAB to scale across different network sizes.

\begin{figure}
	\centering
	\includegraphics[scale=0.45]{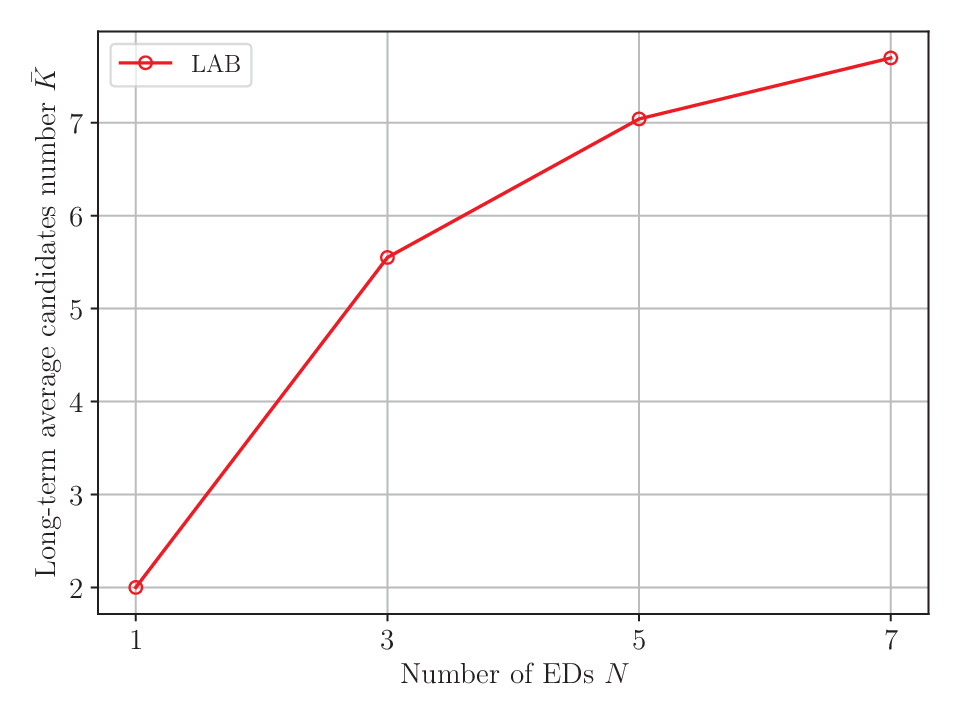}
	\captionsetup{font=footnotesize}
	\caption{The average number of candidates under various number of EDs $K$.}
	\label{Sim_candi_vs_N}
\end{figure}

To quantify the computational efficiency of LAB, Fig. \ref{Sim_candi_vs_N} illustrates the long-term average number of candidate actions $\bar{K}=\frac{1}{T}\sum_{t=0}^{T-1} K_t$ across varying numbers of EDs $N$. The result shows that $\bar{K}$ increases almost linearly with $N$. Unlike BO which exhaustively evaluates all $A^N$ actions to maximize the acquisition function, LAB drastically reduces computational overhead especially at large $N$. For example, when $N=7$ and $A=4$, BO requires $4^7=16,384$ evaluations per slot. In contrast, LAB maintains $\bar{K}\approx7.7$, reducing per-slot evaluations by 3 orders of magnitude (over 2,000 times). This near-linear scaling of $\bar{K}$ with $N$ underscores the suitability of LAB for large-scale networks where standard BO becomes computationally prohibitive.


\section{Conclusion and Discussions}
This paper investigated task-oriented computation offloading in a multi-device EI system for visual object detection, aiming to maximize long-term accuracy of all EDs while minimizing end-to-end inference latency under fixed bandwidth constraints. We formulated the problem as a black-box MINLP, where the challenges lie in the content-dependent accuracy, absence of analytical models, and combinatorial complexity. To solve this problem, we proposed LAB, a novel learning framework that seamlessly integrates DRL and BO. Specifically, LAB employed a DNN-based actor to generate degradation actions and a BO-based critic with Gaussian process surrogates for optimal action selection, complemented by convex optimization for bandwidth allocation. Extensive evaluations on a real-world self-driving dataset demonstrated that LAB achieves near-ideal accuracy-latency trade-offs. Crucially, it outperforms conventional DRL in both accuracy and latency performance, while surpassing standard BO with equivalent accuracy yet significantly faster inference and substantially reduced computational overhead.

\begin{figure}
	\centering
	\includegraphics[scale=0.5]{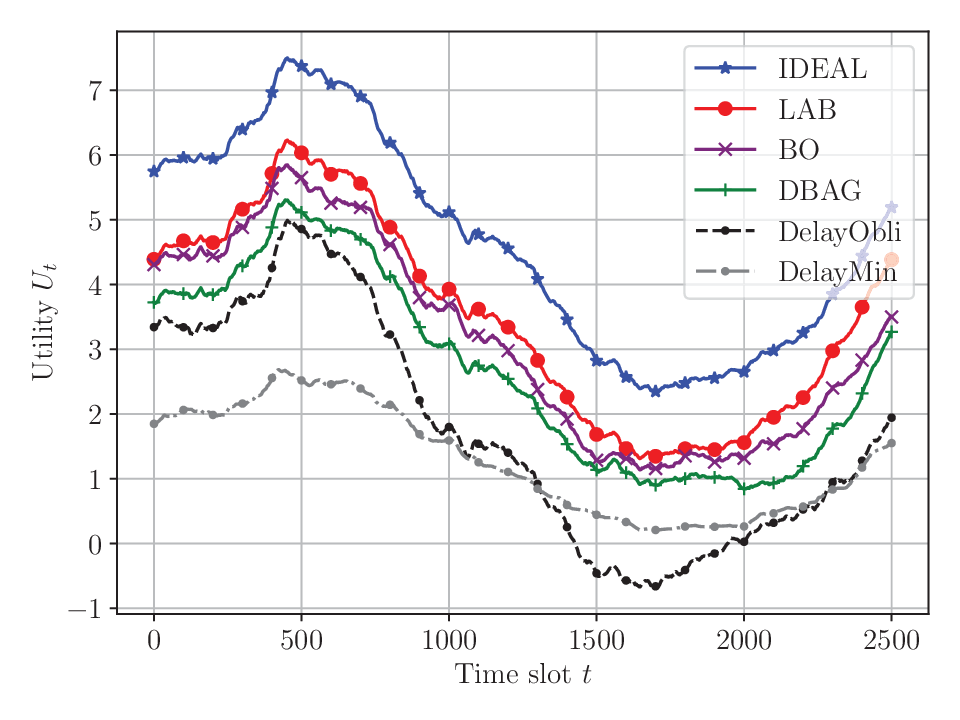}
	\captionsetup{font=footnotesize}
	\caption{{\color{black}{Performance comparison of LAB and baseline methods on VisDrone2019 dataset.}}}
	\label{sim_UAVdataset_utility}
\end{figure}

{\color{black}{We conclude this paper with an extended evaluation of our LAB framework on the real-world VisDrone2019 dataset \cite{zhuDetection2022}. In this simulation, we set $\omega = 4$ and $\zeta = \sqrt{2}$. Fig. \ref{sim_UAVdataset_utility} plots the moving average of the utility $U_t$ over the last 500 slots. The results show that LAB consistently outperforms all baseline methods in this new scenario, improving the utility in average by 11.33\%, 27.98\%, 90.91\%, and 167.30\%, respectively. This outcome supports the potential generalizability of LAB and encourage its further exploration in diverse edge‑AI settings. Notably, the high scene complexity and object density of VisDrone2019 results in a 23.95\% performance gap between LAB and the IDEAL upper bound. This highlights a clear direction for future work, where further gains could be achieved through specialized optimizations in kernel design, DNN architecture, and hyper‑parameters.}}

\bibliographystyle{IEEEtran}
\bibliography{IEEEabrv,MyRefLib}

\begin{thebibliography}{10}
\providecommand{\url}[1]{#1}
\csname url@samestyle\endcsname
\providecommand{\newblock}{\relax}
\providecommand{\bibinfo}[2]{#2}
\providecommand{\BIBentrySTDinterwordspacing}{\spaceskip=0pt\relax}
\providecommand{\BIBentryALTinterwordstretchfactor}{4}
\providecommand{\BIBentryALTinterwordspacing}{\spaceskip=\fontdimen2\font plus
\BIBentryALTinterwordstretchfactor\fontdimen3\font minus
  \fontdimen4\font\relax}
\providecommand{\BIBforeignlanguage}[2]{{%
\expandafter\ifx\csname l@#1\endcsname\relax
\typeout{** WARNING: IEEEtran.bst: No hyphenation pattern has been}%
\typeout{** loaded for the language `#1'. Using the pattern for}%
\typeout{** the default language instead.}%
\else
\language=\csname l@#1\endcsname
\fi
#2}}
\providecommand{\BIBdecl}{\relax}
\BIBdecl

\bibitem{liLAB2026}
X.~Li, S.~Bi, X.~Lin, and Y.-J.~A. Zhang, ``{{LAB}}: {{Integrating Deep
  Reinforcement Learning}} and {{Bayesian Optimization}} for {{Task-Oriented
  Computation Offloading}},'' in \emph{proc. {{IEEE International Conference}}
  on {{Communications}} ({{ICC}})}, Glasgow, Scotland, UK, May 2026, pp. 1--6.

\bibitem{baiCollaborative2025}
L.~Bai, J.~Cao, M.~Zhang, and B.~Li, ``Collaborative {{Edge Intelligence}} for
  {{Autonomous Vehicles}}: {{Opportunities}} and {{Challenges}},'' \emph{IEEE
  Network}, vol.~39, no.~2, pp. 52--60, Mar. 2025.

\bibitem{yangEdge2021}
B.~Yang, X.~Cao, K.~Xiong, C.~Yuen, Y.~L. Guan, S.~Leng, L.~Qian, and Z.~Han,
  ``Edge {{Intelligence}} for {{Autonomous Driving}} in {{6G Wireless System}}:
  {{Design Challenges}} and {{Solutions}},'' \emph{IEEE Wireless Commun.},
  vol.~28, no.~2, pp. 40--47, Apr. 2021.

\bibitem{liResourceEfficient2026}
P.~Li, L.~Qian, D.~Niyato, S.~Mao, and Y.~Wu, ``Toward {{Resource-Efficient
  Collaboration}} of {{Large AI Models}} in {{Mobile Edge Networks}},''
  \emph{IEEE Network}, pp. 1--9, early access, Jan. 2026. [Online]. doi:
  10.1109/MNET.2025.3650049.

\bibitem{morrisonBranchandbound2016}
D.~R. Morrison, S.~H. Jacobson, J.~J. Sauppe, and E.~C. Sewell,
  ``Branch-and-{{Bound Algorithms}}: {{A Survey}} of {{Recent Advances}} in
  {{Searching}}, {{Branching}}, and {{Pruning}},'' \emph{Discrete Optim.},
  vol.~19, pp. 79--102, Feb. 2016.

\bibitem{shamiParticle2022}
T.~M. Shami, A.~A. {El-Saleh}, M.~Alswaitti, Q.~{Al-Tashi}, M.~A. Summakieh,
  and S.~Mirjalili, ``Particle {{Swarm Optimization}}: {{A Comprehensive
  Survey}},'' \emph{IEEE Access}, vol.~10, pp. 10\,031--10\,061, Feb. 2022.

\bibitem{brochuTutorial2010}
E.~Brochu, V.~M. Cora, and N.~de~Freitas, ``A {{Tutorial}} on {{Bayesian
  Optimization}} of {{Expensive Cost Functions}}, with {{Application}} to
  {{Active User Modeling}} and {{Hierarchical Reinforcement Learning}},'' Dec.
  2010. [Online]. Available: http://arxiv.org/abs/1012.2599.

\bibitem{wangRecent2023}
X.~Wang, Y.~Jin, S.~Schmitt, and M.~Olhofer, ``Recent {{Advances}} in
  {{Bayesian Optimization}},'' \emph{ACM Comput, Surv,}, vol.~55, no. 13s, pp.
  287:1--287:36, Jul. 2023.

\bibitem{arulkumaranDeep2017}
K.~Arulkumaran, M.~P. Deisenroth, M.~Brundage, and A.~A. Bharath, ``Deep
  {{Reinforcement Learning}}: {{A Brief Survey}},'' \emph{IEEE Signal Process
  Mag.}, vol.~34, no.~6, pp. 26--38, Nov. 2017.

\bibitem{lillicrapContinuous2019}
T.~P. Lillicrap, J.~J. Hunt, A.~Pritzel, N.~Heess, T.~Erez, Y.~Tassa,
  D.~Silver, and D.~Wierstra, ``Continuous {{Control}} with {{Deep
  Reinforcement Learning}},'' Jul. 2019. [Online]. Available:
  http://arxiv.org/abs/1509.02971.

\bibitem{Huang2019}
L.~Huang, S.~Bi, and Y.~J.~A. Zhang, ``Deep reinforcement learning for online
  computation offloading in wireless powered mobile-edge computing networks,''
  \emph{IEEE Trans. Mob. Comput.}, vol.~19, no.~11, pp. 2581--2593, Nov. 2020.

\bibitem{Nguyen2019a}
T.~T. Nguyen, V.~N. Ha, L.~B. Le, and R.~Schober, ``Joint {{Data Compression}}
  and {{Computation Offloading}} in {{Hierarchical Fog-Cloud Systems}},''
  \emph{IEEE Trans. Wireless Commun.}, vol.~19, no.~1, pp. 293--309, Jan. 2020.

\bibitem{duQoEGuaranteed2026}
J.~Du, J.~Gong, X.~Chu, Z.~Xiong, X.~Chen, M.~Dong, and F.~Richard~Yu,
  ``{{QoE-Guaranteed Optimization}} in {{MEC-Enabled Metaverse}}: {{An Active
  Inference Deep Reinforcement Learning Approach}},'' \emph{IEEE Transactions
  on Cognitive Communications and Networking}, vol.~12, pp. 189--203, 2026.

\bibitem{biLyapunovGuided2021a}
S.~Bi, L.~Huang, H.~Wang, and Y.-J.~A. Zhang, ``Lyapunov-{{Guided Deep
  Reinforcement Learning}} for {{Stable Online Computation Offloading}} in
  {{Mobile-Edge Computing Networks}},'' \emph{IEEE Trans. Wireless Commun.},
  vol.~20, no.~11, Nov. 2021.

\bibitem{liEnergyEfficient2022}
X.~Li, S.~Bi, Y.~Zheng, and H.~Wang, ``Energy-{{Efficient Online Data Sensing}}
  and {{Processing}} in {{Wireless Powered Edge Computing Systems}},''
  \emph{IEEE Trans. Commun.}, vol.~70, no.~8, pp. 5612--5628, Aug. 2022.

\bibitem{duProfit2025}
J.~Du, J.~Xu, A.~Sun, J.~Kang, Y.~Hu, F.~Richard~Yu, and {\relax Victor}.~C.~M.
  Leung, ``Profit {{Maximization}} for {{Multi-Time-Scale Hierarchical
  DRL-Based Joint Optimization}} in {{MEC-Enabled Air-Ground Integrated
  Networks}},'' \emph{IEEE Transactions on Communications}, vol.~73, no.~3, pp.
  1591--1606, Mar. 2025.

\bibitem{wangMultiantenna2019}
F.~Wang, J.~Xu, and Z.~Ding, ``Multi-antenna {{NOMA}} for {{Computation
  Offloading}} in {{Multiuser Mobile Edge Computing Systems}},'' \emph{IEEE
  Trans. Commun.}, vol.~67, no.~3, pp. 2450--2463, Mar. 2019.

\bibitem{gaoTask2023}
M.~Gao, R.~Shen, L.~Shi, W.~Qi, J.~Li, and Y.~Li, ``Task {{Partitioning}} and
  {{Offloading}} in {{DNN-Task Enabled Mobile Edge Computing Networks}},''
  \emph{IEEE Trans. Mob. Comput.}, vol.~22, no.~4, pp. 2435--2445, Apr. 2023.

\bibitem{liOptimal2024}
X.~Li and S.~Bi, ``Optimal {{AI Model Splitting}} and {{Resource Allocation}}
  for {{Device-Edge Co-Inference}} in {{Multi-User Wireless Sensing
  Systems}},'' \emph{IEEE Trans. Wireless Commun.}, vol.~23, no.~9, pp.
  11\,094--11\,108, Sep. 2024.

\bibitem{yangEfficient2025}
W.~Yang, Z.~Xiong, S.~Guo, S.~Mao, D.~I. Kim, and M.~Debbah, ``Efficient
  {{Multi-User Offloading}} of {{Personalized Diffusion Models}}: {{A
  DRL-Convex Hybrid Solution}},'' \emph{IEEE Trans. Mob. Comput.}, vol.~24,
  no.~9, pp. 9092--9109, Sep. 2025.

\bibitem{ruizDreamBooth2023}
N.~Ruiz, Y.~Li, V.~Jampani, Y.~Pritch, M.~Rubinstein, and K.~Aberman,
  ``{{DreamBooth}}: {{Fine Tuning Text-to-Image Diffusion Models}} for
  {{Subject-Driven Generation}},'' Mar. 2023. [Online]. Available:
  http://arxiv.org/abs/2208.12242.

\bibitem{liuResource2023}
Z.~Liu, Q.~Lan, and K.~Huang, ``Resource {{Allocation}} for {{Multiuser Edge
  Inference With Batching}} and {{Early Exiting}},'' \emph{IEEE J. Sel. Areas
  Commun.}, vol.~41, no.~4, pp. 1186--1200, Apr. 2023.

\bibitem{wangEdge2023}
S.~Wang, S.~Bi, and Y.-J.~A. Zhang, ``Edge {{Video Analytics With Adaptive
  Information Gathering}}: {{A Deep Reinforcement Learning Approach}},''
  \emph{IEEE Trans. Wireless Commun.}, vol.~22, no.~9, pp. 5800--5813, Sep.
  2023.

\bibitem{tangBayesian2024}
H.~Tang, S.~Bi, S.~Wang, X.~Lin, Y.~Gu, and Z.~Quan, ``A {{Bayesian
  Optimization Approach}} for {{Online System Configuration}} of {{Edge Video
  Analytics}},'' in \emph{proc. {{International Conference}} on {{Wireless
  Communications}} and {{Signal Processing}} ({{WCSP}})}, Hefei, China, Oct.
  2024, pp. 1--6.

\bibitem{galanopoulosAutoML2021a}
A.~Galanopoulos, J.~A. {Ayala-Romero}, D.~J. Leith, and G.~Iosifidis,
  ``{{AutoML}} for {{Video Analytics}} with {{Edge Computing}},'' in
  \emph{proc. {{IEEE Conference}} on {{Computer Communications}}
  ({{INFOCOM}})}, Vancouver, BC, Canada, May 2021, pp. 1--10.

\bibitem{gongBayesian2023}
S.~Gong, M.~Wang, B.~Gu, W.~Zhang, D.~T. Hoang, and D.~Niyato, ``Bayesian
  {{Optimization Enhanced Deep Reinforcement Learning}} for {{Trajectory
  Planning}} and {{Network Formation}} in {{Multi-UAV Networks}},'' \emph{IEEE
  Trans. Veh. Technol.}, vol.~72, no.~8, pp. 10\,933--10\,948, Aug. 2023.

\bibitem{burtLaplacian1983}
P.~Burt and E.~Adelson, ``The {{Laplacian Pyramid}} as a {{Compact Image
  Code}},'' \emph{IEEE Trans. Commun.}, vol.~31, no.~4, pp. 532--540, Apr.
  1983.

\bibitem{redmonYou2016}
J.~Redmon, S.~Divvala, R.~Girshick, and A.~Farhadi, ``You {{Only Look Once}}:
  {{Unified}}, {{Real-Time Object Detection}},'' May 2016. [Online]. Available:
  http://arxiv.org/abs/1506.02640.

\bibitem{boydConvex2004}
S.~P. Boyd and L.~Vandenberghe, \emph{Convex {{Optimization}}}.\hskip 1em plus
  0.5em minus 0.4em\relax Cambridge: Cambridge University Press, 2004.

\bibitem{yanBayesian2024}
J.~Yan, Q.~Qin, and G.~B. Giannakis, ``Bayesian {{Optimization}} for {{Online
  Management}} in {{Dynamic Mobile Edge Computing}},'' \emph{IEEE Trans.
  Wireless Commun.}, vol.~23, no.~4, pp. 3425--3436, Apr. 2024.

\bibitem{okadaEfficient2019}
S.~Okada, M.~Ohzeki, and S.~Taguchi, ``Efficient {{Partition}} of {{Integer
  Optimization Problems}} with {{One-hot Encoding}},'' \emph{Sci. Rep.},
  vol.~9, no.~1, pp. 1--12, Sep. 2019.

\bibitem{jocherUltralytics2025}
G.~Jocher, J.~Qiu, and A.~Chaurasia, ``Ultralytics {{YOLO}},'' Sept. 2024.
  [Online]. Available: https://github.com/ultralytics/ultralytics. Accessed:
  Aug. 11, 2025.

\bibitem{Self2025}
``Self {{Driving Car Dataset}},'' Apr. 2020. [Online]. Available:
  https://public.roboflow.com/object-detection/self-driving-car. Accessed: Aug.
  11, 2025.

\bibitem{apsMOSEK2025}
``{{MOSEK Optimization Toolbox}} for {{MATLAB}},'' July 2025. [Online].
  Available: https://docs.mosek.com/11.0/toolbox.pdf. Accessed: Aug. 11, 2025.

\bibitem{zhuDetection2022}
P.~Zhu, L.~Wen, D.~Du, X.~Bian, H.~Fan, Q.~Hu, and H.~Ling, ``Detection and
  {{Tracking Meet Drones Challenge}},'' \emph{IEEE Trans. Pattern Anal. Mach.
  Intell.}, vol.~44, no.~11, pp. 7380--7399, Nov. 2022.

\end{thebibliography}

%
%
%
%
\end{document}